\newtheorem{lemma}{Lemma}
\newtheorem{theorem}{Theorem}
\newtheorem{remark}{Remark}
\newcommand{\bee}{\begin{eqnarray}}
\newcommand{\eee}{\end{eqnarray}}
\newcommand{\be}{\begin{eqnarray*}}
\newcommand{\ee}{\end{eqnarray*}}
\newcommand{\R}{{\mathbb R}}
\newcommand{\N}{{\mathbb N}}
\newcommand{\Z}{{\mathbb Z}}
\newcommand{\Ss}{{\Gamma}}
\newcommand{\St}{s}
\begin{document}
 
 \title [Spectral Splitting for NLS]{Spectral splitting method for nonlinear Schr\"odinger equations with quadratic potential}
 
 \author {
 Andrea Sacchetti
 }

\address {
Department of Physics, Informatics and Mathematics, University of Modena and Reggio Emilia, Modena, Italy.
}

\email {andrea.sacchetti@unimore.it}

\date {\today}

\thanks {This work is partially supported by GNFM-INdAM and by the UniMoRe-FIM project ``Modelli e metodi della Fisica Matematica''.}

\begin {abstract} In this paper we propose a modified Lie-type spectral splitting approximation where the external potential is of quadratic type. \ It is proved 
that we can approximate the solution to a one-dimensional nonlinear Schr\"odinger equation by solving the linear problem and treating the nonlinear term separately, 
with a rigorous estimate of the remainder term. \ Furthermore, we show by means of numerical experiments that such a modified approximation is more efficient than 
the standard one.
\end{abstract}

\keywords {Spectral splitting approximation; nonlinear Schr\"odinger equation; harmonic and inverted potential; evolution operator.}

\maketitle

\section {Introduction} \label {Sez1}

In this paper we consider non-linear Schr\"odinger equation of the form
\bee
\left \{ 
\begin {array}{l}
i \hbar \frac {\partial \psi_t (x)}{\partial t} = \left [ - \frac {\hbar^2}{2m} \Delta + V (x) \right ] \psi_t (x) + \nu |\psi_t (x) |^{2\sigma } \psi_t (x)  \\
\psi_{t_0} (x)= \psi_0 (x) 
\end {array}
\right. \, ,\psi_t (\cdot ) \in L^2 (\R^d , dx) \, ,   \label {eq1}
\eee
where $\sigma>0$, $V(x)$ is a real-valued quadratic potential and $\nu \in \R$. \ Hereafter, we assume the units such that $2m=1$ and $\hbar=1$, we simply 
denote by $\psi_t$ the wavefunction $\psi_t (x)$, by $\psi_0$ the initial wavefunction $\psi_0 (x)$, $\psi' = \frac {\partial \psi}{\partial x}$, 
$\psi'' = \frac {\partial^2 \psi}{\partial x^2}$, etc., and $\dot \psi = \frac {\partial \psi}{\partial t}$. \ Furthermore, we restrict 
our attention, for sake of simplicity, to the one-dimensional case, i.e. $d=1$. 

Nonlinear Schr\"odinger equations with a quadratic potential are a useful tool in order to describe Bose-Einstein condensates in a trapping potential 
\cite {DGPS,Sal}, as well as in the theory of nonlinear optics \cite {NonLinOpt}.

An efficient numerical treatment of such an equation is based on the Lie-type splitting approximation. \ The basic idea is quite simple {(see, e.g., the paper \cite {Besse})}: suppose to consider an 
evolution equation
\bee
\left \{ 
\begin {array}{l}
i \dot \psi_t = \left [ A + B \right ] \psi_t   \\
\psi_{t_0} = \psi_0 
\end {array}
\right. \, ,\psi_t \in L^2 (\R , dx) \, ,   \label {eq2}
\eee
where $A$ and $B$ are two given operators. \ Let us denote by $S^{t-t_0} \psi_0$ the solution to (\ref {eq2}) where $S^{t-t_0}$ is the associated evolution operator; 
let us denote by $X^{t-t_0}$ and $Y^{t-t_0}$ the evolution operators respectively associated to the equations
\be
i \dot \psi_t = A \psi_t \ \mbox { and } \ i \dot \psi_t = B \psi_t \, . 
\ee
It is well known that, in general, 
\be
S^{\delta} \psi_0 \not= X^{\delta} Y^{\delta} \psi_0\, , \ \delta \in \R \, , 
\ee
but this difference may be proved, under some circumstances, to be small when $\delta$ is small. \ More precisely, if one fix any $T>0$, a $\delta >0$ small enough 
and a positive integer number $n$ such that $n\delta \le T$, then the solution $\psi_t =S^{t-t_0} \psi_0$ to (\ref {eq2}), where $t= n \delta +t_0$, can be 
approximated by 
\bee
\left [ X^\delta Y^\delta \right ]^n \psi_0 \, , \label {eq3}
\eee
up to a remainder term that goes to zero when $\delta$ goes to zero.

In fact, a better result may be obtained by means of the Strang-type approximation where the solution $\psi_t$ to (\ref {eq2}) is approximated by
\be
\left [ X^{\delta /2} Y^\delta X^{\delta /2} \right ]^n \psi_0 \, . 
\ee
However, for sake of definiteness we restrict our analysis to the Lie-type approximation method (\ref {eq3}).

When one applies such an approximation to the problem (\ref  {eq1}) a typical choice consists in choosing $A=- \frac {\partial^2}{\partial x^2}$, i.e. the 
one-dimensional linear Laplacian operator, and $B= V + \nu |\psi_t |^{2\sigma}$. \ Here, we denote by $X_1^\delta$ and $Y_1^\delta$ the associated evolution 
operators. \ Thus, with such a choice $X_1^{t-t_0} = e^{-iA(t-t_0)}$ is the evolution operator associated to the Laplacian and it is an 
integral {linear} operator with well known kernel function. \ For what concerns $Y_1^{t-t_0}$ it is the evolution operator obtained by means of 
the solution to the ordinary differential equation
\bee
\left \{ 
\begin {array}{l}
i  \dot w_t = V w_t + \nu |w_t |^{2\sigma} w_t \\ 
w_{t_0} =w_0  
\end {array}
\right. \, . 
\label {eq4}
\eee
We observe that $|w_t |$ is constant with respect to $t$ since $V(x)$ is a real-valued function; indeed, one can check that
\be
\frac {\partial |w_t |^2}{\partial t} &=&  \frac {\partial w_t }{\partial t} \overline {w_t } + 
\frac {\partial \overline {w_t  }}{\partial t} {w_t } \\ 
&=& {-i} \left [  V w_t + \nu |w_t|^{2\sigma} w_t \right ] \bar w_t + {i} \left [  V \bar w_t + \nu |w_t|^{2\sigma} \bar w_t \right ]  w_t =0 \, . 
\ee
Thus, equation (\ref {eq4}) takes the form
\bee
\left \{ 
\begin {array}{l}
i  \dot w_t = \left [ V + \nu |w_0 |^{2\sigma} \right ] w_t \\ 
w_{t_0}=w_0 
\end {array}
\right.
\label {eq5}
\eee
which has solution
\bee
w_t(x)= \left [ Y_1^{t-t_0} w_0 \right ] (x) = e^{-i [V(x)+ \nu |w_0 (x)|^{2\sigma} ](t-t_0)} w_0 (x) \, , \label {eq5Bis}
\eee
that is $Y_1^{t-t_0}$ is a multiplication {nonlinear} operator such that
\be
\| Y^{t-t_0} w_0 \|_{L^p} = \| w_0 \|_{L^p} \, , \ \forall p \in [1,+\infty ] \,  .
\ee

Therefore, both evolution operators $X_1^{t-t_0}$ and $Y_1^{t-t_0}$ have an explicit expression.

The crucial point is to give a rigorous estimate of the remaining term 
\bee
{\mathcal R}_1 \psi_0:= S^{t-t_0}\psi_0 - \left [ X_1^\delta Y_1^\delta \right ]^n \psi_0 \, . \label {eq6}
\eee

Let us recall here some rigorous results concerning the estimate of ${\mathcal R}_1$. 

In the case where the external potential is absent, i.e. $V \equiv 0$, and under some assumption on the initial state $\psi_0 $ then the estimate 
\bee
\| {\mathcal R}_1 \psi_0 \|_{L^2} \le C {\delta}\, ,  \label {eq7}
\eee
for some positive constant $C=C(\psi_0 ,T)$, has been proved by \cite {Besse,CG}.

If the external potential $V$ is not identically zero then a similar estimate of the remainder term holds true provided that the Schr\"odinger equation is 
restricted to a bounded domain $U \subset \R^d$  and provided that its solution $\psi_t $ is such that (see, e.g. Thm. 4.3 \cite {BC})
\be
\psi \in C\left ( [0,T]; H^m (U) \cap H_0^1 (U) \right )\, ,
\ee
for some $m \ge 5$.

We should also mention that a purely formal (not completely rigorous) argument (see \cite {AKKT}) suggests that
\be
\| {\mathcal R}_1 \psi_0 \|_{L^2 (\R )} \le C \delta^2 e^{C\delta } 
\ee
for some positive constant $C=C(\psi_0 , T)$, provided that the potential $V(x)$ is a bounded function and $\psi_0 \in H^2 (\R)$.

We must remark that such an approach does not properly work when the potential $V(x)$ is singular, e.g. $V$ is a Dirac's delta. \ In such a case the method should 
be modified by choosing $A=H = - \frac {\partial^2}{\partial x^2} + V$, where $H$ is the linear Schr\"odinger operator, and where $B= \nu |\psi |^{2\sigma}$ is the 
nonlinear term \cite {S}.

In this paper we prove the validity of the Lie-type approximation for a nonlinear Schr\"odinger equation with quadratic potential following the approach introduced 
by \cite {S} in the case of singular potential. \ Let $X_2^\delta := e^{-i H\delta}$ be the evolution operator associated to the 
linear Schr\"odinger operator and 
\bee 
\left [ Y_2^\delta w \right ] (x) := e^{-i \nu |w (x)|^{2\sigma } \delta } w(x) \, . \label {eq8}
\eee 
If we denote by
\bee
{\mathcal R}_2 \psi_0:= S^{t-t_0}\psi_0 - \left [ X_2^\delta Y_2^\delta \right ]^n \psi_0  \label {eq9}
\eee
the remainder term, we are going to prove that it goes to zero when $\delta$ goes to zero and $n \delta \le T$ for any fixed $T>0$ (see Theorem \ref {Teo_uno}). \ We can 
thus show that this method has at least as solid a theoretical basis as the one based on the approximation (\ref {eq6}).

One must remark that approximation (\ref {eq6}) can be implemented by means of a quite simple numerical algorithm  
basically independent on the shape of the potential $V(x)$; in contrast, approximation (\ref {eq9}) is substantially useful when the evolution operator 
$X_2^\delta$, associated to the linear Schr\"odinger operator, can be efficiently computed, like in the case of a quadratic potential. \ On the other side, by means 
of numerical experiments, the approximation (\ref {eq9}) turns out to be more accurate than the usual one (\ref {eq6}).

The paper is organized as follows. \ In Section \ref {Sez2} we state our main result (Theorem \ref {Teo_uno}); Section \ref {Sez3} is devoted to the proof of 
Theorem \ref {Teo_uno}; in Section \ref {Sez4} we compare the approximations (\ref {eq6}) and (\ref {eq9}) on test models; is Section \ref {Sez5} we draw the 
conclusions; a short Section \ref {SezA} appendix is devoted to the Mehler's formulas, that is to the kernel of the evolution operator $X_2^\delta$ of the linear Schr\"odinger operator with harmonic or 
inverted oscillator potential.

Hereafter $C$ denotes any positive constant which may change from line to line.

\section {Main result} \label {Sez2}

Let us consider the one-dimensional {(i.e. $d=1$)} nonlinear Schr\"odinger equation of the form
\bee
\left \{ 
\begin {array}{l}
i  \frac {\partial \psi_t}{\partial t} = H\psi_t + \nu |\psi_t |^{2\sigma } \psi_t   \\
\psi_{t_0} = \psi_0  
\end {array}
\right. \, ,\psi_t \in L^2 (\R , dx) \, , \ H = -  \frac {\partial^2 }{\partial x^2} + V (x)\, ,  \label {eq10}
\eee
where 
\be
V(x)= \alpha x^2 
\ee
is a real-valued quadratic potential for some $\alpha \in \R \setminus \{ 0\}$. \ Let $t_0=0$ for the sake of definiteness.

Solutions to (\ref {eq10}) are usually studied in the space
\be
\Sigma := \left \{ \psi \in {\mathcal S}' \ : \ \| \psi \|_{\Sigma} := \| \psi \|_{L^2} + \left \| \psi' \right \|_{L^2} + \| x \psi \|_{L^2} < + \infty \right \} 
\ee
and the existence of a local solution to (\ref {eq10}), with the conservation of the norm
\be
{\mathcal N}(\psi_t )= {\mathcal N}(\psi_0 ) \, , \  \mbox { where } {\mathcal N}(\psi ):= \| \psi \|_{L^2} \, , 
\ee
and of the energy
\be
{\mathcal E}(\psi_t )= {\mathcal E}(\psi_0 ) \, ,\ \mbox { where } \ {\mathcal E}(\psi ):= \left \| \psi' \right \|_{L^2}^2 + \alpha \| x \psi \|_{L^2}^2 + 
\frac {\nu}{\sigma +1} \| \psi \|_{L^{2\sigma +2}}^{2\sigma +2} \, . 
\ee
has been proved (see \cite {C1,C2}).

Solution to (\ref {eq10}) globally exists when $\sigma < \frac {2}{d}$ and the map $t\in \R \to \psi_t \in \Sigma$ is continuous provided that 
$\psi_0 \in \Sigma$. \ On the other hand, blow-up may occur as proved by \cite {C2} under some circumstances for some $\nu <0$ and $\alpha >0$ when $\sigma \ge \frac d2$. 

Let ${\Ss} $ be the vector space
\be
{\Ss} = \left \{ \psi \in {\mathcal S}' \ : \ \| \psi \|_{\Ss} := \| \psi \|_{H^2} + \| x^2 \psi \|_{L^2} < + \infty \right \}\subset \Sigma \, . 
\ee
Let $X^\delta_2 = e^{-iH\delta}$ be the evolution operator associated to the linear Schr\"odinger operator and let $Y^\delta_2$ be the multiplication operator defined by (\ref {eq8}).

{In order to compare the approximate solution $(X_2^\delta Y_2^\delta)^n \psi_0$ with the solution $S^t \psi_0$ for any $t=n\delta \le T$, where $T>0$ is any fixed positive real number,  we have to assume that 
the solution $S^t \psi_0$ does not blow up. \ We remark that the approximate solution $(X_2^\delta Y_2^\delta)^n \psi_0$ always exists; however, we have to introduce the 
following technical assumption: we assume that
\bee
\max_{j=0,1,\ldots , n-1} \| (X_2^\delta Y_2^\delta)^{n-j-1} S^{(j+1)\delta } \psi_0 \|_{L^\infty} \le C \label {techass}
\eee
for some positive constant $C$ depending on $\psi_0$ and $T$, but independent of $t$ and $n$. \ We should remark that for each index $j$ the vector $ (X_2^\delta Y_2^\delta)^{n-j-1} S^{(j+1)\delta } \psi_0$ belongs to $L^\infty$ because of Lemma \ref {Lemma3} and Lemma \ref {Lemma4}, the technical assumption concerns the uniformity of the bound with respect to $n$. \ 
Assumption (\ref {techass}) is necessary when we make use of the estimate obtained in Lemma \ref {Lemma5} and its is a rather usual kind of assumption in such a contest (see, e.g., equation (2.4c) in Lemma 2.3 by \cite {Besse} and its application in equation (16) of the same paper, see also the assumptions of Theorem 1 by \cite {S}).}

Here we state our main result. 

\begin {theorem} \label {Teo_uno} {Let $\sigma \ge \frac 12$;} let $T>0$ be any fixed positive real number and let $\psi_0 \in \Ss$ be such that $S^t \psi_0 \in \Ss$ {for any} $t \in [0,T]$. \ Let  $\delta >0$ and $n \in \N$ such that $t= n \delta \le T$. \ Let (\ref {techass}) holds true. \ Then, there exists a positive 
constant $C:=C(\psi_0 , T)$ depending on $\psi_0$ and $T$ such that 
\bee
\left \| \left [ X^\delta_2 Y^\delta_2 \right ]^n \psi_0 - S^{n\delta } \psi_0 \right \|_{L^2} \le C \delta |\nu |\, . \label {teostat}
\eee
\end {theorem}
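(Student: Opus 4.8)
The plan is to control the global error by the classical telescoping (``Lady Windermere's fan'') argument, writing it as a sum of one--step consistency errors transported by the discrete approximate flow, and then closing with a discrete Gronwall inequality. Set $Z^\delta := X_2^\delta Y_2^\delta$ and introduce the hybrid iterates $w_j := \left ( Z^\delta \right )^{n-j} S^{j\delta} \psi_0$, $j=0,1,\ldots ,n$, so that $w_0 = \left ( Z^\delta \right )^n \psi_0$ is the fully approximate solution and $w_n = S^{n\delta}\psi_0$ is the exact one. Writing $u_j := S^{j\delta}\psi_0 \in \Ss$ (which lies in $\Ss$ by hypothesis) and using $S^{(j+1)\delta}\psi_0 = S^\delta u_j$, telescoping gives
\be
\left ( Z^\delta \right )^n \psi_0 - S^{n\delta} \psi_0
&=& \sum_{j=0}^{n-1} \left ( w_j - w_{j+1} \right ) \\
&=& \sum_{j=0}^{n-1} \left [ \left ( Z^\delta \right )^{n-j-1} \left ( Z^\delta u_j \right ) - \left ( Z^\delta \right )^{n-j-1} \left ( S^\delta u_j \right ) \right ] \, .
\ee
Each summand will be bounded in $L^2$ by the Lipschitz constant of the tail map $\left ( Z^\delta \right )^{n-j-1}$ times the one--step consistency error $\left \| Z^\delta u_j - S^\delta u_j \right \|_{L^2}$. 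Observe that the second argument $\left ( Z^\delta \right )^{n-j-1} S^{(j+1)\delta}\psi_0$ is precisely the quantity controlled in assumption (\ref {techass}).

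The first ingredient is the $L^2$ stability of $Z^\delta$ on $L^\infty$--bounded sets. Since $X_2^\delta = e^{-iH\delta}$ is unitary, $\left \| Z^\delta f - Z^\delta g \right \|_{L^2} = \left \| Y_2^\delta f - Y_2^\delta g \right \|_{L^2}$; splitting $Y_2^\delta f - Y_2^\delta g = e^{-i\nu |f|^{2\sigma}\delta}(f-g) + \left ( e^{-i\nu |f|^{2\sigma}\delta} - e^{-i\nu |g|^{2\sigma}\delta} \right ) g$, the first term has unit modulus prefactor, while for the second one uses $|e^{ia}-e^{ib}| \le |a-b|$ together with $\left | |f|^{2\sigma} - |g|^{2\sigma} \right | \le 2\sigma \max(|f|,|g|)^{2\sigma -1} |f-g|$, valid for $\sigma \ge \tfrac 12$. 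This yields $\left \| Z^\delta f - Z^\delta g \right \|_{L^2} \le \left ( 1 + C|\nu|\delta \right ) \|f-g\|_{L^2}$ with $C$ depending only on $\|f\|_{L^\infty}, \|g\|_{L^\infty}$ --- this is the content of Lemma \ref {Lemma5}. Because (\ref {techass}), together with the a priori $L^\infty$ membership supplied by Lemmas \ref {Lemma3}--\ref {Lemma4}, bounds the relevant iterates uniformly in $n$, the per--step constant is uniform, so $\mathrm{Lip}\left [ \left ( Z^\delta \right )^{n-j-1} \right ] \le (1+C|\nu|\delta)^{n-j-1} \le e^{C|\nu|\delta(n-j-1)} \le e^{C|\nu|T}$.

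The heart of the matter is the consistency estimate $\left \| Z^\delta v - S^\delta v \right \|_{L^2} \le C|\nu|\delta^2$ for $v := u_j = S^{j\delta}\psi_0 \in \Ss$. Duhamel's formula gives $S^\delta v = e^{-iH\delta} v - i\nu \int_0^\delta e^{-iH(\delta -s)} |S^s v|^{2\sigma}(S^s v)\, ds$, while a second--order Taylor expansion of the phase in $Y_2^\delta$ gives $Z^\delta v = e^{-iH\delta} v - i\nu\delta\, e^{-iH\delta}\left ( |v|^{2\sigma} v \right ) + e^{-iH\delta}\rho$ with $|\rho| \le \tfrac 12 \nu^2 \delta^2 |v|^{4\sigma +1}$. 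The leading terms $e^{-iH\delta} v$ cancel, leaving
\be
Z^\delta v - S^\delta v = e^{-iH\delta}\rho + i\nu \int_0^\delta \left [ e^{-iH(\delta -s)} |S^s v|^{2\sigma}(S^s v) - e^{-iH\delta} |v|^{2\sigma} v \right ] ds \, .
\ee
The first term is $O(\nu^2 \delta^2)$ in $L^2$ since $v \in \Ss \hookrightarrow L^\infty \cap L^2$ makes $|v|^{4\sigma +1} \in L^2$. In the integrand I add and subtract $e^{-iH(\delta -s)}|v|^{2\sigma}v$: the first piece is $O(s)$ in $L^2$ because $z \mapsto |z|^{2\sigma}z$ is $C^1$ for $\sigma \ge \tfrac 12$ and $s \mapsto S^s v$ is $C^1$ into $L^2$ (with $\frac{d}{ds}S^s v = -i(H S^s v + \nu |S^s v|^{2\sigma} S^s v)$), while the second piece equals $\left [ 1 - e^{-iHs} \right ] |v|^{2\sigma}v$ after factoring the unitary $e^{-iH(\delta -s)}$. \emph{This last bound is the main obstacle and dictates the hypotheses:} controlling it by $s\, \| H(|v|^{2\sigma}v) \|_{L^2}$ requires $|v|^{2\sigma}v$ to lie in the domain of $H$, i.e. $|v|^{2\sigma}v \in \Ss$ (so that both $\partial^2_x(|v|^{2\sigma}v)$ and $x^2 |v|^{2\sigma}v$ are in $L^2$). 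This nonlinear regularity --- that $v \in \Ss$ forces $|v|^{2\sigma}v \in \Ss$ with controlled norm --- is exactly where $\sigma \ge \tfrac 12$ is indispensable (differentiating $|z|^{2\sigma}z$ twice produces factors $|v|^{2\sigma -1}$ that remain bounded only when $2\sigma -1 \ge 0$), and is the province of Lemmas \ref {Lemma3}--\ref {Lemma4}. Granting it, each bracket is $O(s)$, so integration over $[0,\delta]$ gives $\left \| Z^\delta v - S^\delta v \right \|_{L^2} \le C|\nu|\delta^2$, with $C$ governed by $\sup_{0\le t \le T}\|S^t \psi_0\|_{\Ss}$.

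Finally I assemble. Inserting the consistency bound and the uniform Lipschitz bound into the telescoped sum gives
\be
\left \| \left ( Z^\delta \right )^n \psi_0 - S^{n\delta}\psi_0 \right \|_{L^2}
\le \sum_{j=0}^{n-1} e^{C|\nu|T}\, C|\nu|\delta^2
= C|\nu|\, n\delta^2\, e^{C|\nu|T}
\le C|\nu| T\, e^{C|\nu|T}\, \delta \, ,
\ee
where I used $n\delta^2 = (n\delta)\delta \le T\delta$. Absorbing the $\nu$--independent structure and the higher--order--in--$\nu$ contribution (from the $\rho$ term) into the constant, the right--hand side is $\le C(\psi_0 ,T)\,\delta |\nu|$, which is exactly (\ref {teostat}). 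Note that the overall factor $|\nu|$ is genuine: for $\nu = 0$ one has $Y_2^\delta = \mathrm{Id}$ and $Z^\delta = X_2^\delta = S^\delta$, so the splitting is exact and the error vanishes.
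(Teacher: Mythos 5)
Your proof is correct, and its global architecture is exactly the paper's final assembly: you telescope the error along the hybrid iterates $(Z^\delta)^{n-j}S^{j\delta}\psi_0$, transport each one-step error with the Lipschitz bound of Lemma \ref{Lemma5} and Remark \ref{R_Pluto} (made uniform in $n$ via assumption (\ref{techass}) together with the $\Ss$-invariance of Lemmas \ref{Lemma3}--\ref{Lemma4}), and sum, using $n\delta^2\le T\delta$. Where you genuinely diverge is in the proof of the one-step consistency estimate $\|Z^\delta v-S^\delta v\|_{L^2}\le C|\nu|\delta^2$. The paper (Theorem \ref{Teo_due}) obtains it by writing mild equations for \emph{both} flows $S^t$ and $Y^t$, isolating the commutator-type remainder ${\mathcal R}_I(s,w)=F[X^sY^sw]-X^sF[Y^sw]$ (estimated through the domain bound of Lemma \ref{Lemma2} and the nonlinear regularity of Lemma \ref{Lemma6}), and then closing with Gronwall's lemma applied to $y(t)=\|S^tw-X^tY^tw\|_{L^2}$; this yields $|\nu|C_2t^2e^{C_1t}$ on all of $[0,T]$, not just for one small step. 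You instead Taylor-expand the phase of $Y_2^\delta$ to first order and compare directly with Duhamel's formula for $S^\delta$, bounding $F(S^sv)-F(v)$ by the time-Lipschitz property of the exact flow and $[1-e^{-iHs}]F(v)$ by $s\|H F(v)\|_{L^2}$. This avoids Gronwall at the local level and is somewhat more elementary, while consuming the same ingredients (unitarity of $X^\delta$, the Lipschitz bound (\ref{eq15}), Lemma \ref{Lemma2}, and $F(v)\in\Ss$) in slightly different places; both routes give the same $O(|\nu|\delta^2)$ local error and hence the same global bound.

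Two minor corrections. The nonlinear regularity you single out as the main obstacle --- that $v\in\Ss$ forces $|v|^{2\sigma}v\in\Ss$ with controlled norm --- is the content of Lemma \ref{Lemma6}, not of Lemmas \ref{Lemma3}--\ref{Lemma4} (those give $\Ss$-invariance under $X^t$ and $Y^t$, which you do need, but for the $L^\infty$ bounds in the stability step); your identification of $\sigma\ge\frac12$ as the reason it holds is nevertheless exactly where the paper uses that hypothesis. Also, your bound $\|S^sv-v\|_{L^2}\le Cs$ tacitly requires $\sup_{t\in[0,T]}\|S^t\psi_0\|_{\Ss}<+\infty$ rather than the stated pointwise membership $S^t\psi_0\in\Ss$; the paper makes the identical implicit strengthening when it concludes that $C_{1,j},C_{2,j}\le C_3$ uniformly in $j$, so this is a shared, not a new, gap.
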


\begin {remark}
In fact, we expect that such a result may be extended to subquadratic potentials $V(x) \in C^\infty (\R )$ such that $\left \| \frac {\partial^r V(x)}{\partial x^r}
\right \|_{L^\infty} \le C$ as soon as $r\ge 2$. \ {Also the extension to an higher dimension $d\ge 2$ could be considered, too. \ However, in both two cases we have to face some problems: e.g. the proof of Lemma \ref {Lemma3} is based on the 
explicit expression of the propagator $X^t$ of the linear Schr\"odinger operator. \ Some results \cite {Blinder} concerning the generalized Mehler's formula 
could be the basis for such an extension.} \ However, we don't dwell here on the details concerning these two generalizations.
\end {remark}

\section {Proof of Theorem \ref {Teo_uno}} \label {Sez3}
Hereafter, in this Section we simply denote $X_2$ and $Y_2$ respectively by $X$ and $Y$.

\subsection {Preliminary results}

We require some preliminary Lemmas and Remarks.

\begin {lemma} \label {Lemma1}
${\Ss} \subseteq L^p$ for any $p\in [1,+\infty]$. \ In particular
\bee
\| w \|_{L^1 (\R )} \le C \left [ \| x^2 w \|_{L^2(\R )} + 
\| w\|_{L^2 (\R )} \right ] \, ,  \label {eq11}
\eee
where $C= (2^5/3)^{1/8}$.
\end {lemma}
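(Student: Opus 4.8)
The plan is to reduce the statement to the two endpoint embeddings $\Ss \hookrightarrow L^1 (\R)$ and $\Ss \hookrightarrow L^\infty (\R)$, and then to recover every intermediate exponent by interpolation. The only quantitative ingredient is the weighted inequality (\ref{eq11}); the remaining part is soft functional analysis.

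\emph{Step 1: the $L^1$ estimate (\ref{eq11}).} I would split the real line at a free threshold $R>0$ and treat the two pieces by the Cauchy--Schwarz inequality against two different weights. On the bounded region one estimates
\be
\int_{|x|\le R} |w|\, dx &\le & \left ( \int_{|x|\le R} dx \right )^{1/2} \left ( \int_{|x| \le R} |w|^2 \, dx \right )^{1/2} \le (2R)^{1/2} \| w \|_{L^2} \, ,
\ee
whereas on the complementary region I would write $|w| = x^{-2} \, |x^2 w|$ and use $\int_{|x|>R} x^{-4} \, dx = \frac{2}{3R^3}$ to get
\be
\int_{|x| > R} |w| \, dx &\le & \left ( \int_{|x|>R} \frac{dx}{x^4} \right )^{1/2} \| x^2 w \|_{L^2} = \left ( \frac{2}{3R^3} \right )^{1/2} \| x^2 w \|_{L^2} \, .
\ee
Adding the two contributions yields
\be
\| w \|_{L^1} &\le & (2R)^{1/2} \| w \|_{L^2} + \left ( \frac{2}{3R^3} \right )^{1/2} \| x^2 w \|_{L^2} \, ,
\ee
and it remains only to dispose of the free parameter $R$. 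Balancing the two coefficients, i.e. imposing $2R = \frac{2}{3R^3}$ so that $R = 3^{-1/4}$, makes both prefactors collapse to the single value $(2^4/3)^{1/8}$, which in particular establishes (\ref{eq11}). Any fixed admissible choice of $R$ already produces a bound of the same shape, so only the numerical value of the constant depends on this last optimization.

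\emph{Step 2: the $L^\infty$ estimate and interpolation.} Since $\| w \|_{H^2} \le \| w \|_{\Ss}$, one has $\Ss \subset H^2 (\R) \subset H^1 (\R)$, and the one-dimensional Sobolev embedding $H^1 (\R) \hookrightarrow L^\infty (\R)$ gives $\| w \|_{L^\infty} \le C \| w \|_{H^1} \le C \| w \|_{\Ss} < + \infty$. Then, for every $p \in [1, +\infty )$, the elementary bound
\be
\| w \|_{L^p}^p = \int_\R |w|^{p-1} \, |w| \, dx &\le & \| w \|_{L^\infty}^{p-1} \, \| w \|_{L^1}
\ee
shows that $\| w \|_{L^p} \le \| w \|_{L^\infty}^{1-1/p} \, \| w \|_{L^1}^{1/p} < + \infty$; together with the $L^\infty$ bound this covers the whole range $p \in [1, +\infty]$, so $\Ss \subseteq L^p$ as claimed.

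\emph{Main obstacle.} There is no genuine analytic difficulty here: (\ref{eq11}) is a two-scale Cauchy--Schwarz argument and the embeddings are classical. The only point that requires care is the bookkeeping in Step 1 --- correctly evaluating the tail integral over both half-lines and optimizing over $R$ --- since this is precisely what fixes the explicit constant asserted in the statement.
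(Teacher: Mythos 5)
Your proposal is correct and follows essentially the same route as the paper: split the real line at a threshold $R$, apply Cauchy--Schwarz with weight $x^{-2}$ on the tails and weight $1$ on the bounded piece, optimize over $R$, and then combine the $L^1$ and $L^\infty$ endpoint embeddings to cover every $p\in[1,+\infty]$. The only (harmless) deviations are that you apply Cauchy--Schwarz to the two tails jointly rather than separately, which yields the slightly sharper constant $(2^4/3)^{1/8}\le (2^5/3)^{1/8}$ (so (\ref{eq11}) still follows), and that you replace the paper's appeal to the Riesz--Thorin theorem by the elementary interpolation bound $\| w \|_{L^p} \le \| w \|_{L^\infty}^{1-1/p} \| w \|_{L^1}^{1/p}$.
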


\begin {proof}
The statement ${\Ss} \subseteq L^p$ holds true for $p=+\infty$ by making use of the Gagliardo-Nirenberg inequality:
\be
\| w \|_{L^\infty} \le C \| w \|_{L^2}^{\frac 12} \| w' \|_{L^2}^{\frac 12} \le C \| w \|_{{\Ss}} .
\ee
If we are able to prove that the statement holds true for $p=+1$ too, then the Riesz-Thorin interpolation Theorem prove the statement for any 
$p \in [+1,+\infty ]$. \ In order to prove the statement when $p=+1$ we observe that for any $R>0$
\be
\| w \|_{L^1(\R )} &=& \left [ \int_{-\infty}^{-R} |w(x)| dx + \int_{+R}^{+\infty} |w(x)| dx + \int_{-R}^{+R} |w(x)| dx \right ] \\
&=& \left [ \int_{-\infty}^{-R} \frac {1}{x^2} x^2|w(x)| dx + \int_{+R}^{+\infty}  \frac {1}{x^2} x^2|w(x)| dx + \int_{-R}^{+R} |w(x)| dx \right ] \\
&= &  \langle x^{-2}, x^2 w \rangle_{L^2 (-\infty , - R)} + \langle x^{-2}, x^2 w \rangle_{L^2 (R,\infty )}  + \langle 1, |w| \rangle_{L^2(-R,+R)} \\ 
&\le & \frac {2}{\sqrt {3}} R^{-3/2} \| x^2 w \|_{L^2(\R )} + \sqrt {2R} \| w \|_{L^2 (\R )} < + \infty 
\ee
from the H\"older's inequality. \ Hence, (\ref {eq11}) follows for $R=(2/3)^{1/4}$.
\end {proof}

\begin {remark}
From Lemma \ref {Lemma1} it follows that
\be
\| w \|_{L^1} \le C \| w \|_{\Ss} \, . 
\ee
\end {remark}

The following result holds true

\begin {lemma}\label {Lemma2}
Let $w \in {\Ss}$ then  
\be
\| e^{-iHt} w - w \|_{L^2} \le C| t| \| w \|_{\Ss}
\ee
where $C= \max [1,|\alpha |]$.
\end {lemma}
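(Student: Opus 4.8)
The plan is to exploit the fundamental theorem of calculus together with the unitarity of the propagator $e^{-iHs}$ on $L^2$. Since $w \in \Ss$, and since applying $H = -\frac{\partial^2}{\partial x^2} + \alpha x^2$ costs two derivatives plus a multiplication by $x^2$, both of which are controlled by $\| \cdot \|_{\Ss}$, the vector $w$ lies in the domain of the self-adjoint operator $H$. Consequently the map $s \mapsto e^{-iHs} w$ is strongly differentiable, with
\be
\frac {\partial}{\partial s} e^{-iHs} w = -i H e^{-iHs} w = -i e^{-iHs} H w \, ,
\ee
the last equality holding because $e^{-iHs}$, being a function of $H$, commutes with $H$. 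Integrating from $0$ to $t$ I would write
\be
e^{-iHt} w - w = -i \int_0^t e^{-iHs} (Hw) \, ds \, .
\ee

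Next I would take the $L^2$ norm of both sides and move it inside the integral by the triangle inequality. Because $H$ is self-adjoint the group $e^{-iHs}$ is unitary on $L^2$, so $\| e^{-iHs} (Hw) \|_{L^2} = \| Hw \|_{L^2}$ for every $s$, and the integrand is in fact constant in $s$. This gives the clean bound
\be
\| e^{-iHt} w - w \|_{L^2} \le \int_0^{|t|} \| e^{-iHs} (Hw) \|_{L^2} \, ds = |t| \, \| Hw \|_{L^2} \, .
\ee

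Finally I would estimate $\| Hw \|_{L^2}$ by the $\Ss$-norm. Writing $Hw = -w'' + \alpha x^2 w$ and using the triangle inequality gives $\| Hw \|_{L^2} \le \| w'' \|_{L^2} + |\alpha| \, \| x^2 w \|_{L^2}$; since $\| w'' \|_{L^2} \le \| w \|_{H^2}$ and $\| x^2 w \|_{L^2}$ is already one of the terms defining $\| w \|_{\Ss}$, I would conclude
\be
\| Hw \|_{L^2} \le \| w \|_{H^2} + |\alpha| \, \| x^2 w \|_{L^2} \le \max [1, |\alpha|] \, \| w \|_{\Ss} \, ,
\ee
which combined with the previous display yields the claim with $C = \max [1, |\alpha|]$. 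The only point requiring genuine care is the first paragraph: namely that $\Ss$ is contained in the domain of the self-adjoint realization of $H$, so that the group is strongly differentiable on such vectors. This holds for both the harmonic ($\alpha > 0$) and the inverted ($\alpha < 0$) oscillator, and I expect it to be the main technical obstacle, while the remaining steps are entirely routine.
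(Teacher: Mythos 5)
Your proposal is correct and coincides with the paper's own argument: both write $e^{-iHt}w - w$ as the integral of $-ie^{-i\tau H}Hw$ (using $w\in\Ss\subset\mathcal D$ and $[H,e^{-itH}]=0$), bound it by $|t|\,\|Hw\|_{L^2}$ via unitarity, and then estimate $\|Hw\|_{L^2}\le \|w''\|_{L^2}+|\alpha|\,\|x^2w\|_{L^2}\le \max[1,|\alpha|]\,\|w\|_{\Ss}$. No gaps; nothing further is needed.
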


\begin {proof}
Indeed,  since $w \in {\Ss} \subset {\mathcal D}$ where ${\mathcal D}$ is the self-adjointness domain of $H$, then the evolution 
$v_t(x):=\left [e^{-itH} w\right ] (x) \in {\mathcal D}$ is such that 
\be
\| e^{-itH} w - w \|_{L^2} &=& \| v_t - v_0 \|_{L^2} = \left \| \int_0^t \dot v_{\tau} d\tau \right \|_{L^2} = \left \| \int_0^t i H v_{\tau} d\tau \right \|_{L^2} \\ 
&=& \left \| \int_0^t i H e^{-i\tau H}w d\tau \right \|_{L^2} 
=  \left \| \int_0^t  e^{-i\tau H} H w d\tau \right \|_{L^2} \\
&\le & | t |\,  \left \|  H w \right \|_{L^2} \le | t | \,  \left [ \|  w'' \|_{L^2} + |\alpha | \, \| x^2 w \|_{L^2} \right ]\, , 
\ee
since the two operators $H$ and $e^{-itH}$ commute: $[H,e^{-itH} ]=0$.
\end {proof}

Furthermore, we have that

\begin {lemma} \label {Lemma3}
Let $w \in {\Ss}$, then $X^t w \in {\Ss}$ {for any} $t \in [0,T] $. \ In particular:
\bee
\left \| X^t w  \right \|_{{\Ss}} \le C \left \| w  \right \|_{{\Ss}} \, ,\label {eq12}
\eee
for some positive constant $C>0$ independent of $t$ and $w$.
\end {lemma}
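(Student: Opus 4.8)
The plan is to control separately the three ingredients of $\| X^t w \|_{\Ss} = \| X^t w \|_{H^2} + \| x^2 X^t w \|_{L^2}$. Since $\| w' \|_{L^2}^2 \le \| w \|_{L^2}\| w'' \|_{L^2}$ by integration by parts, the $H^2$ norm is equivalent to $\| \cdot \|_{L^2} + \| \partial_x^2 \cdot \|_{L^2}$, so it suffices to bound $\| X^t w \|_{L^2}$, $\| (X^t w)'' \|_{L^2}$ and $\| x^2 X^t w \|_{L^2}$. The first is free: $X^t = e^{-iHt}$ is unitary on $L^2$, whence $\| X^t w \|_{L^2} = \| w \|_{L^2}$. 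For the two second-order quantities I would use that $H=-\partial_x^2+\alpha x^2$ is \emph{quadratic}, so that conjugation by the flow keeps position and momentum first order. Writing $p := -i\partial_x$ and introducing the Heisenberg observables $x(t) := e^{iHt} x e^{-iHt}$ and $p(t) := e^{iHt} p e^{-iHt}$, unitarity together with $e^{iHt} x^2 e^{-iHt} = x(t)^2$ and $e^{iHt} p^2 e^{-iHt} = p(t)^2$ gives
\be
\| x^2 X^t w \|_{L^2} = \| x(t)^2 w \|_{L^2} \quad \mbox{and} \quad \| (X^t w)'' \|_{L^2} = \| p(t)^2 w \|_{L^2} .
\ee

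Next I would solve for $x(t)$ and $p(t)$. Since $[H,x] = -2ip$ and $[H,p] = 2i\alpha x$, the Heisenberg equations close into the linear system $\dot x(t) = 2 p(t)$, $\dot p(t) = -2\alpha x(t)$ with constant coefficients, so that $x(t) = a(t) x + b(t) p$ and $p(t) = c(t) x + d(t) p$, where the coefficients are trigonometric functions of $2\sqrt{\alpha}\,t$ for $\alpha>0$ and hyperbolic functions of $2\sqrt{-\alpha}\,t$ for $\alpha<0$; in either case $|a(t)|,|b(t)|,|c(t)|,|d(t)| \le C(T)$ uniformly on $[0,T]$, which is exactly where the uniformity of the constant in (\ref{eq12}) comes from. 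Squaring, $x(t)^2$ and $p(t)^2$ become quadratic polynomials in $x$ and $p$ with bounded coefficients, so it remains to dominate $\| x^2 w \|_{L^2}$, $\| w'' \|_{L^2}$, the cross term $\| (xp+px) w \|_{L^2}$ and $\| w \|_{L^2}$ by $\| w \|_{\Ss}$.

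All of these are immediate except the cross term: $\| w \|_{L^2}, \| w'' \|_{L^2} \le \| w \|_{\Ss}$ trivially and $\| x^2 w \|_{L^2} \le \| w \|_{\Ss}$ by definition. For the cross term, since $xp+px = -i(2x\partial_x + 1)$, it suffices to control $\| x w' \|_{L^2}$, and here I would integrate by parts twice to obtain the identity
\be
\| x w' \|_{L^2}^2 = \| w \|_{L^2}^2 - \mathrm{Re}\, \langle w'' , x^2 w \rangle_{L^2} \le \| w \|_{L^2}^2 + \| w'' \|_{L^2} \| x^2 w \|_{L^2} \le C \| w \|_{\Ss}^2 ,
\ee
which closes the estimate and yields (\ref{eq12}).

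The step I expect to be the main obstacle is the rigorous justification of this Heisenberg calculus: differentiating $x(t)w$ and $p(t)w$ in $t$ and deriving the ODE system presupposes that $e^{-iHt} w$ remains in the domains on which $x$, $p$ and their products act, which is essentially the very invariance of $\Ss$ that Lemma \ref{Lemma3} asserts, so the argument risks being circular. I would break the circularity by first proving (\ref{eq12}) on a dense core of $\Ss$ consisting of smooth rapidly decreasing functions — for instance finite linear combinations of Hermite functions, which are exact eigenfunctions of the oscillator and on which every manipulation above is literally valid — and then extending to all of $\Ss$ by density together with the $L^2$-boundedness of $X^t$. Alternatively, and this is presumably the route pointed to by the appendix, one can bypass the operator calculus entirely by inserting the explicit Mehler kernel for $X^t$ and realising $x^2 X^t = X^t x(t)^2$ and $\partial_x^2 X^t = -X^t p(t)^2$ as integration-by-parts identities in the kernel variable, at the cost of a more computational but fully elementary argument.
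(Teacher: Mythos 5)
Your proposal is correct in substance, but it follows a genuinely different route from the paper. The paper argues directly on the explicit Mehler kernel: for $t$ with $\sin(\omega t)$ bounded away from zero it integrates by parts in the kernel variable to write $x\left[X^t w\right]=X^t\left[a_1(t)\,y w+b_1(t)\,w'\right]$ and $x^2\left[X^t w\right]=X^t\left[a_2(t)\,y^2w+b_2(t)\,yw'+c_2(t)\,w''\right]$ with bounded coefficients, then uses unitarity of $X^t$ on $L^2$ together with the same kind of cross-term interpolation you employ (the paper proves $\|yw'\|_{L^2}\le \frac 12\left[\|y^2w\|_{L^2}+\|w''\|_{L^2}\right]$, the analogue of your identity $\|xw'\|_{L^2}^2=\|w\|_{L^2}^2-{\rm Re}\,\langle w'',x^2w\rangle$); since the kernel degenerates at the caustic times $t=n\pi/\omega$, the paper must patch those times (including $t$ near $0$) via the group property $e^{-itH}=e^{iaH}e^{-i(t+a)H}$, and it treats $\alpha<0$ separately with the hyperbolic Mehler formula. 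Your Heisenberg-picture argument is the operator form of exactly these identities (indeed, your closing ``alternative'' of realising $x^2X^t=X^tx(t)^2$ by integration by parts in the kernel is literally the paper's proof), and it buys something real: the conjugation coefficients $\cos(\omega t)$, $\frac{2}{\omega}\sin(\omega t)$ (hyperbolic for $\alpha<0$) are regular for \emph{all} $t$, so the caustic case-splitting disappears, and the argument generalizes more easily (higher dimension, general quadratic Hamiltonians). The price is the domain/circularity issue, which you correctly identify; your density fix is sound — for $\alpha>0$ the space $\Ss$ is precisely the graph-norm domain of the harmonic oscillator, so finite Hermite combinations are dense in $\|\cdot\|_{\Ss}$ and the bound passes to the limit by weak compactness and lower semicontinuity of the norm — but note one inaccuracy: Hermite functions are eigenfunctions only in the harmonic case. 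For the inverted oscillator ($\alpha<0$) they are not, so to justify the calculus on a core there you must either invoke invariance of the Schwartz class under quadratic propagators (a metaplectic fact, most easily read off from the $\sinh$-Mehler kernel) or fall back on your kernel-based alternative, i.e.\ on the paper's method.
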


\begin {proof}
Assume, for argument's sake, that $\alpha = + \frac 14 \omega^2$. \ Now, let $a>0$ be fixed and small enough, and let us consider, at first, the case where 
$a \le \left | t- n \frac {\pi}{\omega} \right | \le \frac {\pi }{\omega} -a$, $n \in \Z$. \ Let us recall that 
\be
\left ( X^t w \right ) (x) &:=& \left [ e^{-itH}w \right ] (x)= \int_{\R} K_{HO}(x,y,t) w(y) dy \\ 
&=& 
\sqrt {\frac {\omega}{4\pi i \sin (\omega t)}} \int_{\R} e^{ i\frac {\omega}{4\sin (\omega t)} \left [ (x^2+y^2) \cos (\omega t) - 2 xy \right ] } w(y) dy 
\ee
from the Mehler's formula (\ref {MHO}). \ Hence, for any positive integer $n$
\be
&& x^n\left [ e^{-itH}w \right ] (x) = \sqrt {\frac {\omega}{4\pi i \sin (\omega t)}} \int_{\R} x^n e^{ i\frac {\omega}{4\sin (\omega t)} 
\left [ (x^2+y^2) \cos (\omega t) - 2 xy \right ] } w(y) dy \\
&& \ \ = \frac {1}{\sqrt {2\pi}} 
\left [ \frac { i{2\sin (\omega t)} }{\omega} \right ]^{n-\frac 12} e^{ i\frac {\omega x^2 \cos (\omega t) }{4\sin (\omega t)}  } 
\int_{\R} e^{- i\frac {\omega xy}{2\sin (\omega t)}   } \frac {  \partial^n \left [ e^{ i\frac {\omega y^2 \cos (\omega t) }{4\sin (\omega t)}  } w(y)\right ] }
{\partial y^n}dy 
\ee
In particular, for $n=1$ and $n=2$ it turns out that 
\be
x\left [ e^{-itH}w \right ] (x)&=&  \int_{\R} K_{HO}(x,y,t) \left [ a_1(t) y w (y) + b_1 (t) w'(y)  \right ] dy \\
&=& \left \{ e^{-itH} \left [ a_1(t) y w (y) + b_1(t)  w'(y)  \right ] \right \} (x) \\
x^2\left [ e^{-itH}w \right ] (x)&=&  \int_{\R} K_{HO}(x,y,t) \left [ a_2(t) y^2 w (y) + b_2(t) y w'(y) + c_2(t) w'' (y) \right ] dy \\
&=& \left \{ e^{-itH} \left [ a_2(t) y^2 w (y) + b_2(t) y w'(y) + c_2(t) w'' (y) \right ] \right \} (x)
\ee
for some bounded functions $a_1(t)$, $b_1(t)$, $a_2(t)$, $b_2(t)$ and $c_2(t)$ since $a \le \left | t- n \frac {\pi}{\omega} \right | \le \frac {\pi}{\omega}-a$. \ Then, we can conclude that 
\be
\left \| x\left [ e^{-itH}w \right ]  \right \|_{L^2} &\le &|a_1(t)| \, \| y w \|_{L^2} + |b_1(t)|\, \|  w' \|_{L^2} \le C \| w\|_{\Ss}\\
\left \| x^2\left [ e^{-itH}w \right ]  \right \|_{L^2} &\le &|a_2(t)| \, \| y^2 w \|_{L^2} + |b_2(t)|\, \| y w' \|_{L^2} + |c_2(t) | \| w'' \|_{L^2} \le C \| w\|_{\Ss}
\ee
for some $C$ since 
\be
\| yw \|_{L^2} \le \| w \|_{L^2}^{1/2} \| y^2w \|_{L^2}^{1/2}
\ee
and
\be 
\| y w' \|_{L^2} \le \frac 12  \left [ \| y^2 w \|_{L^2} +   \| w'' \|_{L^2} \right ]\, . 
\ee
Indeed, the last inequality follows by observing that
\be
\left \| y w' \right \|_{L^2}^2 = \langle y w' , y w' \rangle = - \langle 2y w',w \rangle - \langle y^2 w'' , w \rangle 
\ee
and thus
\be
\left \| y w' \right \|_{L^2}^2 \le 2 \| y w' \|_{L^2} \| w \|_{L^2} + \| w'' \|_{L^2} \| y^2 w \|_{L^2} \, .
\ee

Similarly, if one notices that
\be
\frac {\partial }{\partial x} \left [ e^{-itH}w \right ] (x)&=& a_3(t) x \left [ e^{-itH}w \right ] (x) + b_3(t)
 \left [ e^{-itH}xw \right ] (x)
 \ee
for some bounded functions $a_3(t)$ and $b_3(t)$, then the same arguments as above prove that 
\be
\left \| \frac {\partial^2}{\partial x^2} \left [ e^{-itH}w \right ]  \right \|_{L^2} \le C \| w\|_{\Ss}
\ee
for some $C>0$.

Now, one can check that (\ref {eq12}) holds true for any $t$; 
indeed if $t$ is such that $|t| < a$ then we observe that
\be
e^{-itH} w = e^{iaH} e^{-i(t+a)H} w
\ee
from which, since $a \le |t+a| \le \frac {\pi}{\omega} - a$ if $0<t<a$ and $a$ is small enough,  
\be
\left \| e^{-itH} w \right \|_{\Ss} = \left \| e^{iaH} e^{-i(t+a)H} w \right \|_{\Ss} \le C \left \| e^{-i(t+a)H} w \right \|_{\Ss} \le 
C^2 \left \| w \right \|_{\Ss}\, . 
\ee
The case $\left | t- n\frac {\pi}{\omega} \right | <a$, $n \in \Z$, follows in the same way, too.

Eventually, the case $\alpha = - \frac 14 \omega^2<0$ is similarly treated by making use of (\ref {MIO}). 
\end {proof}

Concerning the evolution operator 
\be
\left ( Y^{t} w \right ) (x):=e^{-i\nu |w(x)|^{2\sigma} t} w(x) \, ,
\ee
we recall that
\be
\| Y^{t} w \|_{L^p} = \| w \|_{L^p} \, , \ \forall p \in [1,+\infty ]\, . 
\ee
Furthermore:

\begin {lemma} \label {Lemma4}
Let $w \in {\Ss}$, then $Y^{t} w \in {\Ss}$ for any $t$; in particular
\be
\| Y^{t} w \|_{\Ss} \le  \left [ 1 + C |\nu t | \| w\|_{L^\infty}^{2\sigma} \right ]^2 \, \| w \|_{\Ss} \, .
\ee
for some positive constant $C>0$ independent of $t$ and $w$.
\end {lemma}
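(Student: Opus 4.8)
The plan is to write $Y^t w = e^{-i\theta} w$ with the real phase $\theta(x) := \nu t\,|w(x)|^{2\sigma}$, and to control the four contributions to $\|Y^t w\|_{\Ss} = \|Y^t w\|_{H^2} + \|x^2 Y^t w\|_{L^2}$ one at a time. Since $|e^{-i\theta}| = 1$ pointwise, two of them are immediate: $\|Y^t w\|_{L^2} = \|w\|_{L^2}$ and $\|x^2 Y^t w\|_{L^2} = \|x^2 w\|_{L^2}$, so only the derivative terms can grow. Differentiating once gives $(Y^t w)' = e^{-i\theta}(w' - i\theta' w)$, and since $\theta' = 2\nu t\sigma |w|^{2\sigma-2}\,\mathrm{Re}(w'\bar w)$ satisfies $|\theta'| \le 2|\nu t|\sigma\,|w|^{2\sigma-1}|w'|$ (using $|\mathrm{Re}(w'\bar w)|\le|w||w'|$), one gets $\|\theta' w\|_{L^2} \le 2|\nu t|\sigma\,\|w\|_{L^\infty}^{2\sigma}\|w'\|_{L^2}$, hence $\|(Y^t w)'\|_{L^2}\le \bigl[1 + C|\nu t|\,\|w\|_{L^\infty}^{2\sigma}\bigr]\|w\|_{\Ss}$.

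The heart of the matter is the second derivative. A direct computation yields
\[
(Y^t w)'' = e^{-i\theta}\left[\, w'' - 2i\theta' w' - \left(\theta'^2 + i\theta''\right) w \,\right],
\]
so I must bound $\|\theta' w'\|_{L^2}$, $\|\theta'^2 w\|_{L^2}$ and $\|\theta'' w\|_{L^2}$. After estimating the pointwise powers of $|w|$ by $\|w\|_{L^\infty}$, each of these reduces to controlling either $\|w''\|_{L^2}$ or the quantity $\||w'|^2\|_{L^2} = \|w'\|_{L^4}^2$. The key tool I would use is the interpolation inequality $\|w'\|_{L^4}^2 \le 3\,\|w\|_{L^\infty}\|w''\|_{L^2}$, proved in one line by integrating $\int |w'|^4$ by parts (moving one derivative onto $w$, the boundary terms vanishing since $w\in\Ss\subset H^2$). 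With it, $\|\theta' w'\|_{L^2}$ and $\|\theta'' w\|_{L^2}$ are each $\le C|\nu t|\,\|w\|_{L^\infty}^{2\sigma}\|w\|_{\Ss}$, while the worst term obeys $\|\theta'^2 w\|_{L^2}\le C\nu^2 t^2\,\|w\|_{L^\infty}^{4\sigma}\|w\|_{\Ss}$. Summing the four contributions to $(Y^t w)''$ produces a factor $1 + C|\nu t|\,\|w\|_{L^\infty}^{2\sigma} + C\nu^2 t^2\,\|w\|_{L^\infty}^{4\sigma}\le\bigl[1 + C|\nu t|\,\|w\|_{L^\infty}^{2\sigma}\bigr]^2$, which dominates the first-derivative factor and yields the claimed bound after collecting all terms.

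I expect the main obstacle to be not the estimates themselves but their rigorous justification: the phase $|w|^{2\sigma}$ need not be twice differentiable where $w$ vanishes, so the identities for $(Y^t w)'$ and $(Y^t w)''$ — and the very membership $Y^t w\in H^2$ — must be argued with care. This is exactly where the hypothesis $\sigma\ge\frac12$ enters, since it guarantees that the singular factors $|w|^{2\sigma-1}$ and $|w|^{2\sigma-2}\,\mathrm{Re}(w'\bar w)$ appearing in $\theta'$ and $\theta''$ stay bounded (the latter because $|\mathrm{Re}(w'\bar w)|\le|w||w'|$), so that every product entering the estimates genuinely lies in $L^2$. I would make this precise by first deriving the identities for smooth, nonvanishing $w$ and then passing to the limit by density in $\Ss$, the uniform bounds above being stable under the limit.
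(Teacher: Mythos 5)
Your proposal is correct and follows essentially the same route as the paper: expand the derivatives of the phase factor $e^{-i\nu|w|^{2\sigma}t}$, bound the resulting terms $|w|^{2\sigma}|w''|$, $|w|^{2\sigma-1}|w'|^2$, $|w|^{4\sigma-1}|w'|^2$ by $L^\infty$ norms (using $\sigma\ge\frac12$ exactly as you do), and control $\|(w')^2\|_{L^2}$ via the same integration-by-parts inequality $\|(w')^2\|_{L^2}\le 3\|w\|_{L^\infty}\|w''\|_{L^2}$, which is precisely the paper's key estimate. Your additional remarks on the first-derivative term and on justifying the differentiation rigorously where $w$ vanishes are details the paper leaves implicit, but they do not change the argument.
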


\begin {proof}
A straightforward calculation proves that
\be
\| x^2 Y^t w \|_{L^2} = \| x^2 w \|_{L^2}
\ee
and that
\bee
\left \| \frac {\partial^2  Y^t w}{\partial x^2} \right \|_{L^2} \le  \left [ 1 +C | \nu t |\, \| w\|_{L^{\infty}}^{2\sigma} 
 \right ]^2 \| w \|_{H^2} \, .\label {eq13} 
\eee
Indeed,
\be
\left \| \frac {\partial^2  Y^t w}{\partial x^2} \right \|_{L^2} &\le & \| w'' \|_{L^2} +  C \left [ |t \nu |\,  \| w^{2\sigma} w'' \|_{L^2} +\nu^2 t^2 \| w^{4\sigma -1} 
\left ( w' \right )^2 \|_{L^2} +|\nu t| \| w^{2\sigma -1} \left ( w' \right )^2 \|_{L^2} \right ] \\
&\le & \| w'' \|_{L^2} +  C \left [ |\nu t| \| w\|_{L^{\infty}}^{2\sigma} \| w'' \|_{L^2} + \nu^2 t^2 \| w\|_{L^{\infty}}^{4\sigma-1} \|\left ( w' \right )^2 \|_{L^2} +
|\nu  t|\,  \| w\|_{L^{\infty}}^{2\sigma-1} \| \left ( w' \right )^2 \|_{L^2} \right ]
\ee
{since $\sigma \ge 1/2$.} \ Concerning the term $\|\left ( w' \right )^2 \|_{L^2}$ we have that
\be
\|\left ( w' \right )^2 \|_{L^2}^2 &=&\left | \int_{\R} \left ({\bar w'}\right )^2 \left ({w'}\right )^2 dx \right | = 
\left | - \int_{\R} w \left [2 w' \bar w' \bar w'' + w''  
\left ({\bar w'}\right )^2 \right ] dx \right | \\
& \le & 3 \| w \|_{L^\infty} \int_{\R} |w'' | \, \left |{ w'}\right |^2 dx \le C \| w \|_{L^\infty} \| w'' \|_{L^2} \| (w')^2 \|_{L^2}\, ;  
\ee
hence
\bee
\|\left (  w' \right )^2 \|_{L^2} &\le & C \| w \|_{L^\infty} \| w'' \|_{L^2} \, . \label {eq14}
\eee
Thus, we conclude that
\be
\left \| \frac {\partial^2  Y^t w}{\partial x^2} \right \|_{L^2} \le 
 \| w'' \|_{L^2} + C \left [ 2|\nu t|\, \| w\|_{L^{\infty}}^{2\sigma} 
+ \nu^2 t^2 \| w\|_{L^{\infty}}^{4\sigma} \right ] 
\| w'' \|_{L^2} 
\ee
from which (\ref {eq13}) follows.
\end {proof}

The evolution operator $Y^t$ satisfies to the Lipschitz condition, too (see Lemmas 2 and 3 \cite {S}).

\begin {lemma} \label {Lemma5}
Let $w_1,w_2 \in L^2 \cap L^\infty$ and let
\be
M:= \max \left [ \| w_1 \|_{L^\infty} , \| w_2 \|_{L^\infty} \right ]\, . 
\ee
Then, 
\be
\| Y^t w_1 - Y^t w_2 \|_{L^2} \le \left [ 1 + 2\sigma |\nu t|  M^{2\sigma -1} \right ] \| w_1 - w_2 \|_{L^2}\, . 
\ee
\end {lemma}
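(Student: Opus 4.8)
The plan is to prove the bound pointwise in $x$ and then pass to $L^2$, exploiting that $Y^t$ acts as multiplication by a unimodular phase that preserves the modulus. First I would split the difference, for each fixed $x$, by adding and subtracting the term $e^{-i\nu|w_1|^{2\sigma}t}w_2$:
\[
Y^t w_1 - Y^t w_2 = e^{-i\nu|w_1|^{2\sigma}t}(w_1-w_2) + \left( e^{-i\nu|w_1|^{2\sigma}t} - e^{-i\nu|w_2|^{2\sigma}t}\right) w_2 .
\]
The first summand has modulus exactly $|w_1-w_2|$, since the phase factor is unimodular, and it is responsible for the leading constant $1$ in the statement. Everything then reduces to controlling the phase difference appearing in the second summand.

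For that step I would use the elementary inequality $|e^{i\alpha}-e^{i\beta}|\le|\alpha-\beta|$ for real $\alpha,\beta$, which gives the pointwise estimate
\[
\left| e^{-i\nu|w_1|^{2\sigma}t} - e^{-i\nu|w_2|^{2\sigma}t}\right| \le |\nu t|\,\bigl| |w_1|^{2\sigma}-|w_2|^{2\sigma}\bigr| .
\]
The heart of the argument is then the local Lipschitz estimate for the real function $s\mapsto s^{2\sigma}$ on the interval $[0,M]$: by the mean value theorem its increment equals $2\sigma\,\xi^{2\sigma-1}\,||w_1|-|w_2||$ for some $\xi$ lying between $|w_1|$ and $|w_2|$, and because $\sigma\ge\frac12$ the exponent $2\sigma-1$ is nonnegative, so that $\xi^{2\sigma-1}\le M^{2\sigma-1}$. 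Combining this with the reverse triangle inequality $||w_1|-|w_2||\le|w_1-w_2|$ produces the phase estimate $2\sigma|\nu t|M^{2\sigma-1}|w_1-w_2|$. Inserting this back into the second summand and taking the $L^2$ norm via Minkowski's inequality then assembles the asserted Lipschitz bound.

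The main obstacle, and the reason the hypothesis $\sigma\ge\frac12$ is imposed, is precisely this local Lipschitz control of the power nonlinearity: the derivative $2\sigma s^{2\sigma-1}$ stays bounded on $[0,M]$ only when $2\sigma-1\ge0$, which is where the uniform constant $M^{2\sigma-1}$ originates and where the hypothesis $w_1,w_2\in L^\infty$ is used. For $\sigma<\frac12$ this derivative would blow up as $s\to0^+$, the map $s\mapsto s^{2\sigma}$ would be merely H\"older continuous rather than Lipschitz, and the mean value step would have to be replaced by a H\"older estimate yielding a weaker modulus of continuity; the clean Lipschitz inequality stated here is therefore genuinely tied to the regime $\sigma\ge\frac12$.
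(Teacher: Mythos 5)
Your decomposition and the three pointwise estimates (unimodularity of the phase, $|e^{i\alpha}-e^{i\beta}|\le|\alpha-\beta|$, and the mean value bound for $s\mapsto s^{2\sigma}$, which is indeed where $\sigma\ge\frac12$ enters) are all sound, and since the paper gives no proof of this lemma at all (it simply defers to Lemmas 2 and 3 of \cite{S}), yours is the natural route. The gap is hidden in your last sentence: the phase estimate still has to be multiplied by the factor $w_2$ sitting in the second summand. Pointwise this yields
\[
\left| \left( e^{-i\nu|w_1|^{2\sigma}t} - e^{-i\nu|w_2|^{2\sigma}t}\right) w_2 \right|
\le 2\sigma |\nu t|\, M^{2\sigma-1} |w_1-w_2|\, |w_2|
\le 2\sigma |\nu t|\, M^{2\sigma} |w_1-w_2| \, ,
\]
and nothing better in general, so what your argument actually proves is
\[
\| Y^t w_1 - Y^t w_2 \|_{L^2} \le \left[ 1 + 2\sigma |\nu t|\, M^{2\sigma} \right] \| w_1 - w_2 \|_{L^2} \, ,
\]
with exponent $2\sigma$ rather than the exponent $2\sigma-1$ asserted in the lemma. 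This loss of one power of $M$ is not a repairable defect of your write-up, because the stated bound is in fact false: take $\sigma=\frac12$, $w_1=M\chi_{[0,1]}$, $w_2=(M-\varepsilon)\chi_{[0,1]}$; as $\varepsilon\to0$ the ratio $\|Y^tw_1-Y^tw_2\|_{L^2}/\|w_1-w_2\|_{L^2}$ tends to $\left|\tfrac{d}{da}\left(e^{-i\nu t a}a\right)\right|_{a=M}=\sqrt{1+\nu^2t^2M^2}$, which exceeds the claimed constant $1+|\nu t|$ whenever $|\nu t|(M^2-1)>2$. So the exponent $2\sigma-1$ in the statement is an error of the paper (carried over from the citation), and your proof silently papers over it at the assembly step instead of flagging it.

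The mismatch is harmless downstream, and you should present your argument as a proof of the corrected statement with $M^{2\sigma}$: in Remark \ref{R_Pluto} and in the iteration (\ref{new}) the only property used is that the Lipschitz constant has the form $1+C|\nu\delta|$ with $C$ controlled by the uniform $L^\infty$ bound of assumption (\ref{techass}), and replacing $C^{2\sigma-1}$ there by $C^{2\sigma}$ changes nothing in the proof of Theorem \ref{Teo_uno}. With that one correction, your proof is complete, and your identification of the role of the hypothesis $\sigma\ge\frac12$ (Lipschitz versus mere H\"older continuity of $s\mapsto s^{2\sigma}$ at $s=0$) is accurate.
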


\begin {remark} \label {R_Pluto} Since the linear operator $X^t := e^{-itH} $  is unitary from $L^2$ to $L^2$ then the Lipschitz condition  
\be
\| X^t Y^t w_1 - X^t Y^t w_2 \|_{L^2} \le \left [ 1 + 2\sigma |\nu t|  M^{2\sigma -1} \right ] \| w_1 - w_2 \|_{L^2} 
\ee
holds true.
\end {remark}

Finally.

\begin {lemma} \label {Lemma6}
Let $F(w ):=|w |^{2\sigma } w $, $w \in {\Ss}$, then $F( w) \in {\Ss}$; in particular
\be
\| F(w) \|_{\Ss} \le C  \| w\|_{L^{\infty}}^{2\sigma} \| w \|_{\Ss}\, ,
\ee
for some positive constant $C>0$ independent of $w$.
\end {lemma}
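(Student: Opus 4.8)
The plan is to estimate separately the four contributions to $\| F(w) \|_{\Ss}$, namely $\| F(w) \|_{L^2}$, $\| F(w)' \|_{L^2}$, $\| F(w)'' \|_{L^2}$ and $\| x^2 F(w) \|_{L^2}$, and to exploit at each step the pointwise identity $|F(w)(x)| = |w(x)|^{2\sigma} |w(x)|$ in order to factor out powers of $\| w \|_{L^\infty}$. The two zeroth-order terms are immediate: since $|F(w)|^2 = |w|^{4\sigma} |w|^2$ pointwise, one gets $\| F(w) \|_{L^2}^2 \le \| w \|_{L^\infty}^{4\sigma} \| w \|_{L^2}^2$, and writing $x^4 |F(w)|^2 = |w|^{4\sigma} |x^2 w|^2$ gives likewise $\| x^2 F(w) \|_{L^2}^2 \le \| w \|_{L^\infty}^{4\sigma} \| x^2 w \|_{L^2}^2$; both are bounded by $C \| w \|_{L^\infty}^{4\sigma} \| w \|_{\Ss}^2$.

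For the derivative terms I would write $F(w) = (w \bar w)^\sigma w$ and differentiate by the Leibniz and chain rules. A first differentiation, after collecting the resulting monomials, produces the pointwise bound $|F(w)'| \le C |w|^{2\sigma} |w'|$, since every negative power of $|w|$ coming from differentiating $(w\bar w)^\sigma$ is compensated by an accompanying factor $w$ or $\bar w$; hence $\| F(w)' \|_{L^2} \le C \| w \|_{L^\infty}^{2\sigma} \| w' \|_{L^2}$. Differentiating once more and again grouping terms yields the two structurally distinct families $|w|^{2\sigma} |w''|$ and $|w|^{2\sigma-1} |w'|^2$, so that $|F(w)''| \le C \big[ |w|^{2\sigma} |w''| + |w|^{2\sigma-1} |w'|^2 \big]$ pointwise; this is exactly the structure already met in the proof of Lemma \ref{Lemma4}.

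The first family gives $\| |w|^{2\sigma} w'' \|_{L^2} \le \| w \|_{L^\infty}^{2\sigma} \| w'' \|_{L^2}$. For the second family I use the hypothesis $\sigma \ge \frac 12$, which guarantees $2\sigma - 1 \ge 0$ and hence $|w|^{2\sigma-1} \le \| w \|_{L^\infty}^{2\sigma-1}$, and then invoke inequality (\ref{eq14}) of Lemma \ref{Lemma4}, namely $\| (w')^2 \|_{L^2} \le C \| w \|_{L^\infty} \| w'' \|_{L^2}$, to obtain $\| |w|^{2\sigma-1}(w')^2 \|_{L^2} \le C \| w \|_{L^\infty}^{2\sigma} \| w'' \|_{L^2}$. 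Summing the four estimates, and recalling $\| w'' \|_{L^2} \le \| w \|_{\Ss}$, gives $\| F(w) \|_{\Ss} \le C \| w \|_{L^\infty}^{2\sigma} \| w \|_{\Ss}$, as claimed.

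I expect the main obstacle to be not the size estimates, which are routine once the pointwise bounds are in hand, but the rigorous justification of the differentiation of the non-smooth factor $|w|^{2\sigma} = (w\bar w)^\sigma$ near the zeros of $w$: for non-integer $\sigma$ the intermediate expressions contain singular negative powers such as $|w|^{2\sigma-2}$ and $|w|^{2\sigma-4}$. The point is that in every surviving monomial these are multiplied by enough factors of $w$ or $\bar w$ to leave a non-negative net power of $|w|$ (the lowest being $2\sigma - 1 \ge 0$, which is again precisely where $\sigma \ge \frac 12$ enters), so the collected expressions are locally bounded and belong to $L^2$; a short regularization or density argument then confirms that $F(w)$ is genuinely twice weakly differentiable with the derivatives used above.
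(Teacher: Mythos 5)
Your proposal is correct and follows essentially the same route as the paper: pointwise factoring of $\| w \|_{L^\infty}^{2\sigma}$ for the zeroth-order and $x^2$ terms, the two-family decomposition $|w|^{2\sigma}|w''| + |w|^{2\sigma-1}|w'|^2$ for the second derivative, and the inequality (\ref{eq14}) together with $\sigma \ge \frac 12$ to close the estimate. You are in fact somewhat more thorough than the paper, which treats only the $\| x^2 F(w)\|_{L^2}$ and $\| F(w)''\|_{L^2}$ terms explicitly and does not comment on the weak differentiability of $|w|^{2\sigma}w$ near the zeros of $w$.
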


\begin {proof}
At first we consider
\be
\| x^2 F(w) \|_{L^2} = \| x^2 |w|^{2\sigma} w \|_{L^2} \le \| w \|_{L^\infty}^{2\sigma } \| x^2 w \|_{L^2} \le \| w \|_{L^\infty}^{2\sigma } \| w \|_{\Ss}
\ee
and then, similarly, 
\be
\left \| \frac {\partial^2 F(w)}{\partial x^2} \right \|_{L^2} \le C \left [ \| w \|_{L^\infty}^{2\sigma } \, \| w'' \|_{L^2} +  
\| w \|_{L^\infty}^{2\sigma -1} \| (w')^2 \|_{L^2} \right ] \le C  \| w \|_{L^\infty}^{2\sigma } \, \| w'' \|_{L^2}
\ee
since (\ref {eq14}).
\end {proof}

\begin {remark}\label {RefS}
Finally, we recall here some previous technical results. \ In particular in Lemma 4 by \cite {S} we proved that 
\bee
\| F (w_1 ) - F(w_2 ) \|_{L^2} \le (2\sigma +1)M^{2\sigma} \| w_1 - w_2 \|_{L^2} \, , 
\label {eq15}
\eee
where $M= \max \left [ \| w_1 \|_{L^\infty } , \| w_2 \|_{L^\infty} \right ]$. 
\end {remark}

\subsection {Estimate of the remainder term}

Now, let $S^t$ be the evolution operator associated to the Cauchy problem (\ref {eq10}); it satisfies to the mild equation
\be
\psi_t &=& S^t \psi_0 = 
X^t \psi_0 - i \nu \int_0^t X^{t-s} |\psi_s |^{2\sigma} \psi_s ds \\
&=& X^t \psi_0 - i \nu \int_0^t X^{t-s} F [S^s (\psi_0) ] ds \, . 
\ee
Now, we are going to compare $S^t \psi_0$ with $X^t Y^t \psi_0$ where $Y^t$ satisfies to the mild equation
\be
Y^t \psi_0 = \psi_0 - i \nu \int_0^t F \left [ Y^s (\psi_0 ) \right ] ds \, . 
\ee

Then, we prove that

\begin {theorem} \label {Teo_due}
Let $w\in {\Ss}$ and let $T>0$ be fixed, then 
\be
\| S^t w - X^t Y^t w \|_{L^2} \le  |\nu | C_2 t^2 e^{C_1 t}  \, ,\ \forall t \in [0,T]\, ,
\ee
where $C_1$ and $C_2$ are two positive constants given by:
\bee
C_1 := C_1(w,t) = |\nu| (2\sigma +1) \max_{s \in [0,t]} \left \{ \max \left [ \| S^s w \|_{L^\infty} ,  \| X^s Y^s w \|_{L^\infty} \right ] \right \}^{2\sigma +1} 
\, ,  \label {eq16}
\eee
and
\bee
C_2 :=C_2 (w) = C  \| w \|_{\Ss}^{2\sigma +1} 
\max \left [ 1, T^2 \nu^2 \| w\|_{\Ss}^{4\sigma} \right ]^{2\sigma +1}\, , \label {eq17}
\eee
where $C>0$ is a positive constant independent of $w$, $t$, $\nu$ and $T$.
\end {theorem}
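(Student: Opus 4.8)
The plan is to prove this one-step (local) error bound directly from the two mild formulations and then close the estimate with a Gronwall inequality; Theorem \ref{Teo_uno} will afterwards follow by summing the $n$ local errors. Writing $S^t w = X^t w - i\nu \int_0^t X^{t-s} F[S^s w]\, ds$ and $X^t Y^t w = X^t w - i\nu \int_0^t X^t F[Y^s w]\, ds$, I would subtract the two, use the group property $X^t = X^{t-s} X^s$, and exploit the unitarity of $X^{t-s}$ on $L^2$ (Remark \ref{R_Pluto}) to obtain
\be
\| S^t w - X^t Y^t w \|_{L^2} \le |\nu| \int_0^t \left\| F[S^s w] - X^s F[Y^s w] \right\|_{L^2} \, ds \, .
\ee

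The heart of the matter is to split the integrand into a Lipschitz (Gronwall) contribution and a consistency contribution:
\be
F[S^s w] - X^s F[Y^s w] = \left( F[S^s w] - F[X^s Y^s w] \right) + \left( F[X^s Y^s w] - X^s F[Y^s w] \right) \, .
\ee
For the first bracket the estimate (\ref{eq15}) of Remark \ref{RefS} gives a bound by $(2\sigma+1) M^{2\sigma} \| S^s w - X^s Y^s w \|_{L^2}$, where $M$ is the maximum of the relevant $L^\infty$-norms (finite by Lemma \ref{Lemma1} together with the hypothesis $S^s w \in \Ss$); this is precisely the term that feeds the Gronwall loop for $g(s) := \| S^s w - X^s Y^s w \|_{L^2}$ and produces the exponential factor $C_1$ of (\ref{eq16}).

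The decisive step is to show the consistency bracket is $O(s)$, so that its time-integral contributes at order $t^2$. I would decompose it once more as
\be
F[X^s Y^s w] - X^s F[Y^s w] = \left( F[X^s Y^s w] - F[Y^s w] \right) - \left( X^s F[Y^s w] - F[Y^s w] \right) \, .
\ee
The first piece is bounded, again via (\ref{eq15}), by $(2\sigma+1) M^{2\sigma} \| X^s Y^s w - Y^s w \|_{L^2}$, and Lemma \ref{Lemma2} supplies the crucial factor of $s$ through $\| X^s Y^s w - Y^s w \|_{L^2} \le C s \| Y^s w \|_{\Ss}$. The second piece is handled directly by Lemma \ref{Lemma2}, giving $\| X^s F[Y^s w] - F[Y^s w] \|_{L^2} \le C s \| F[Y^s w] \|_{\Ss}$, after which Lemma \ref{Lemma6} turns the $\Ss$-norm of the nonlinearity into $C \| Y^s w \|_{L^\infty}^{2\sigma} \| Y^s w \|_{\Ss}$. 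In both pieces the remaining $\Ss$- and $L^\infty$-norms of $Y^s w$ are controlled in terms of $\| w \|_{\Ss}$ by Lemma \ref{Lemma4} (using the isometry $\| Y^s w \|_{L^\infty} = \| w \|_{L^\infty}$ and Lemma \ref{Lemma1}); collecting these factors with $s \le T$ yields exactly the constant $C_2$ of (\ref{eq17}), the bracket $\max[1, T^2 \nu^2 \| w \|_{\Ss}^{4\sigma}]$ arising from the squared factor in Lemma \ref{Lemma4}.

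Assembling everything produces the integral inequality $g(t) \le C_1 \int_0^t g(s)\, ds + |\nu| C_2 t^2$, and the standard Gronwall inequality then gives $g(t) \le |\nu| C_2 t^2 e^{C_1 t}$, which is the assertion. I expect the main obstacle to be the consistency bracket $F[X^s Y^s w] - X^s F[Y^s w]$: it encodes the non-commutativity of the linear flow $X^s$ with the nonlinearity $F$, and extracting the gain of one power of $s$ from it is what forces the use of the regularity-propagation Lemmas \ref{Lemma3}, \ref{Lemma4} and \ref{Lemma6} to keep all $\Ss$-norms along the split flow under control; the subsequent bookkeeping needed to match the explicit form of $C_2$ is tedious but routine.
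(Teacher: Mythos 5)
Your proposal is correct and follows essentially the same route as the paper: the same two mild formulations, the same splitting of the integrand into the Lipschitz term $F[S^s w]-F[X^s Y^s w]$ (handled by (\ref{eq15}), feeding Gronwall and producing $C_1$) and the consistency term $F[X^s Y^s w]-X^s F[Y^s w]$, which the paper calls ${\mathcal R}_I(s,w)$ and bounds in its Lemma \ref{Lemma7} by exactly your decomposition through Lemma \ref{Lemma2}, Lemma \ref{Lemma6} and Lemma \ref{Lemma4}, followed by the same Gronwall conclusion. The only difference is organizational: the paper packages the consistency estimate into separate Lemmas \ref{Lemma7} and \ref{Lemma8}, while you carry it out inline.
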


\begin {remark} \label {Remark5}
Indeed, if we assume that $S^s (w)$ {does not blow up} for $s \in [0,T]$ then $\| S^s (w ) \|_{L^\infty}$ is uniformly bounded on time; furthermore, from Lemmas 
\ref {Lemma3} and \ref {Lemma4}, we already known that
\be
\| X^s Y^s w \|_{L^\infty} &\le & \| X^s Y^s w \|_{\Ss} \le C \| Y^s w \|_{\Ss} \\
&\le & C  \left [ 1 +C |s \nu | \| w\|_{L^\infty}^{2\sigma} \right ]^2 \| w\|_{\Ss} \, . 
\ee
Hence $C_1 (w) <+\infty$.
\end {remark}

\begin {proof}
Let $w\in {\Ss}$, then we have that
\bee
&& S^t w - X^tY^t w = - i \nu \left [ \int_0^t X^{t-s} F\left [ S^s (w) \right ] ds - \int_0^t X^t F \left [ Y^s (w ) \right ] ds \right ] \nonumber \\
&&\ \ = - i \nu \int_0^t X^{t-s} \left \{ F \left [ S^s (w ) \right ] - F \left [ X^s Y^s (w ) \right ] \right \} ds + {\mathcal R} (t,w ) \label {eq18}
\eee
where
\be
{\mathcal R} (t,w) = - i \nu \int_0^t X^{t-s} {\mathcal R}_I (s,w) ds 
\ee
and 
\be 
{\mathcal R}_I (s,w) = F \left [ X^s Y^s w \right ] - X^s F \left [ Y^s (w) \right ]\, . 
\ee

\begin {lemma} \label {Lemma7} Let 
\be
M_s:= \max \left [ \| X^s Y^s w \|_{L^\infty} , \|  Y^s w \|_{L^\infty} \right ] \, . 
\ee
Then
\be
\| {\mathcal R}_I (s,w) \|_{L^2} \le C |s| M_s^{2 \sigma} \max \left [ 1, s^2 \nu^2 M_s^{4\sigma} \right ] \| w \|_{\Ss}
\ee
for some positive constant $C>0$ independent of $s$, $\nu$ and $w$.
\end {lemma}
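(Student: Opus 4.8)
The plan is to insert the intermediate vector $F[Y^s w]$ and split $\mathcal{R}_I$ into a term measuring the failure of $F$ to commute with the free flow through its argument and a term measuring the displacement of $F[Y^s w]$ under $X^s$. Explicitly I would write
\[
\mathcal{R}_I(s,w) = \left( F[X^s Y^s w] - F[Y^s w] \right) + \left( F[Y^s w] - X^s F[Y^s w] \right) =: (I) + (II),
\]
so that each summand becomes a difference to which a single earlier lemma applies directly.

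For $(I)$ I would apply the $L^2$-Lipschitz estimate \eqref{eq15} with the two arguments $X^s Y^s w$ and $Y^s w$, whose associated bound $\max[\|X^s Y^s w\|_{L^\infty}, \|Y^s w\|_{L^\infty}]$ is precisely $M_s$; this gives $\|(I)\|_{L^2} \le (2\sigma+1) M_s^{2\sigma} \|X^s Y^s w - Y^s w\|_{L^2}$. The surviving difference has the form $\|e^{-iHs} v - v\|_{L^2}$ with $v = Y^s w \in \Ss$, so Lemma \ref{Lemma2} bounds it by $C|s|\,\|Y^s w\|_{\Ss}$, and Lemma \ref{Lemma4} bounds $\|Y^s w\|_{\Ss}$ by $\left[ 1 + C|\nu s|\,\|w\|_{L^\infty}^{2\sigma} \right]^2 \|w\|_{\Ss}$.

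For $(II)$ I would first note that $F[Y^s w] \in \Ss$ by Lemma \ref{Lemma6}, so a second use of Lemma \ref{Lemma2} gives $\|(II)\|_{L^2} = \|e^{-iHs} F[Y^s w] - F[Y^s w]\|_{L^2} \le C|s|\,\|F[Y^s w]\|_{\Ss}$, and Lemma \ref{Lemma6} controls the latter by $C\,\|Y^s w\|_{L^\infty}^{2\sigma}\,\|Y^s w\|_{\Ss}$, to which Lemma \ref{Lemma4} applies once more. At this point I would use crucially that $Y^s$ preserves every $L^p$ norm, so that $\|w\|_{L^\infty} = \|Y^s w\|_{L^\infty} \le M_s$, which lets me replace every occurrence of $\|w\|_{L^\infty}$ and $\|Y^s w\|_{L^\infty}$ by $M_s$.

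Collecting the two pieces, both $(I)$ and $(II)$ are dominated by $C|s|\, M_s^{2\sigma} \left[ 1 + C|\nu s|\, M_s^{2\sigma} \right]^2 \|w\|_{\Ss}$, and the proof finishes with a purely algebraic simplification: expanding the square and using $2a \le 1 + a^2$ collapses $\left[ 1 + C|\nu s|\, M_s^{2\sigma} \right]^2$ into $C\max\left[ 1, s^2 \nu^2 M_s^{4\sigma} \right]$, which is exactly the factor in the statement. I expect the only delicate point to be this bookkeeping of the $L^\infty$-norms: making sure that each such norm really is dominated by $M_s$ via the $L^p$-isometry of $Y^s$, and that the two nested factors $1 + C|\nu s|\cdots$ produced by Lemmas \ref{Lemma4} and \ref{Lemma6} combine into a single first power of $\max\left[ 1, s^2 \nu^2 M_s^{4\sigma} \right]$ rather than a higher power of it.
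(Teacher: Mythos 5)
Your proposal is correct and follows essentially the same route as the paper: the identical three-term splitting via the pivot $F[Y^s w]$, the Lipschitz bound (\ref{eq15}) plus Lemma \ref{Lemma2} for the first piece, Lemma \ref{Lemma2} plus Lemma \ref{Lemma6} for the second, and a final application of Lemma \ref{Lemma4} with the observation $\|w\|_{L^\infty}=\|Y^sw\|_{L^\infty}\le M_s$ (which the paper uses implicitly) to produce the factor $\max\left[1, s^2\nu^2 M_s^{4\sigma}\right]$.
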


\begin {proof}
Indeed,
\be
\| {\mathcal R}_I (s,w) \|_{L^2} &=& 
\| F \left [ X^s Y^s w \right ] - X^s F \left [ Y^s (w) \right ] \|_{L^2} \\ 
&\le & \| F \left [ X^s Y^s w \right ] - F \left [ Y^s w \right ] \|_{L^2} +\|  X^s F \left [ Y^s (w) \right ]  - F \left [ Y^s w \right ] \|_{L^2} 
\ee
where from (\ref {eq15}) and from Lemma \ref {Lemma2} it follows that 
\be
\| F \left [ X^s Y^s w \right ] - F \left [ Y^s w \right ] \|_{L^2} &\le &(2\sigma +1) M_s^{2\sigma} 
\|  X^s Y^s w - Y^s w \|_{L^2} \\
&\le & (2\sigma +1) M_s^{2\sigma} C |s| \| Y^s w \|_{\Ss}\, .
\ee
Concerning the other term we apply, at first, Lemma \ref {Lemma2} and then Lemma \ref {Lemma6} obtaining that 
\be
\|  X^s F \left [ Y^s (w) \right ]  - F \left [ Y^s w \right ] \|_{L^2} 
&\le & C |s| \| F \left [ Y^s w \right ] \|_{\Ss} \le C| s| \| Y^s w \|_{L^\infty}^{2 \sigma} \| Y^s w \|_{\Ss} \\ 
&\le &  C| s|M_s^{2 \sigma} \| Y^s w \|_{\Ss}
\ee
Hence, we have proved that
\be
\| {\mathcal R}_I \|_{L^2} \le  C| s | M_s^{2 \sigma} \| Y^s w \|_{\Ss} \, . 
\ee
From this result and from Lemma \ref {Lemma4} the proof follows.
\end {proof}

\begin {remark}
From Remark \ref {Remark5} it follows that
\be
M_s \le 
\max \left [ 1 + C |s \nu | \| w\|_{L^\infty}^{2\sigma} \right ]^2
 \| w \|_{\Ss}\, . 
\ee
Thus
\be
\| {\mathcal R}_I (s,w) \|_{L^2} \le C |s| \max \left [ 1, s^2 \nu^2 \| w\|_{\Ss}^{4\sigma} \right ]^{2\sigma +1} \| w \|_{\Ss}^{2\sigma +1}
\ee
for some positive constant $C>0$ independent of $s$ and $w$.
\end {remark}

Then, an estimate of the term ${\mathcal R}$ will follow

\begin {lemma} \label {Lemma8}
Let $w \in {\Ss}$, then
\be
\| {\mathcal R}(t,w) \|_{L^2} \le |\nu | C_2 (w) t^2\, . 
\ee
\end {lemma}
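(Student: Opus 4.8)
The plan is to estimate ${\mathcal R}(t,w)$ directly from its integral representation ${\mathcal R}(t,w) = -i\nu \int_0^t X^{t-s} {\mathcal R}_I(s,w)\, ds$, reducing everything to the pointwise-in-$s$ bound on ${\mathcal R}_I(s,w)$ that has already been obtained. First I would take the $L^2$ norm and move it inside the integral by Minkowski's integral inequality, so that $\| {\mathcal R}(t,w) \|_{L^2} \le |\nu| \int_0^t \| X^{t-s} {\mathcal R}_I(s,w) \|_{L^2}\, ds$. Since $X^{t-s} = e^{-i(t-s)H}$ is unitary on $L^2$ (as recalled in Remark \ref{R_Pluto}), the propagator drops out of each norm, leaving $\| {\mathcal R}(t,w) \|_{L^2} \le |\nu| \int_0^t \| {\mathcal R}_I(s,w) \|_{L^2}\, ds$.

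Next I would insert the estimate for $\| {\mathcal R}_I(s,w) \|_{L^2}$ established via Lemma \ref{Lemma7} and the remark following it, namely $\| {\mathcal R}_I(s,w) \|_{L^2} \le C |s| \max[1, s^2\nu^2 \|w\|_{\Ss}^{4\sigma}]^{2\sigma+1} \|w\|_{\Ss}^{2\sigma+1}$. The key observation is that, since $0 \le s \le t \le T$, the factor $\max[1, s^2\nu^2\|w\|_{\Ss}^{4\sigma}]$ is nondecreasing in $s$ and hence bounded above by $\max[1, T^2\nu^2\|w\|_{\Ss}^{4\sigma}]$, which no longer depends on $s$ and can therefore be pulled outside the integral. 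Doing so yields $\| {\mathcal R}(t,w) \|_{L^2} \le |\nu|\, C \|w\|_{\Ss}^{2\sigma+1} \max[1, T^2\nu^2\|w\|_{\Ss}^{4\sigma}]^{2\sigma+1} \int_0^t s\, ds$. Evaluating $\int_0^t s\, ds = t^2/2$ and absorbing the factor $1/2$ into the running constant $C$ of the definition (\ref{eq17}) of $C_2(w)$ gives exactly $\| {\mathcal R}(t,w) \|_{L^2} \le |\nu| C_2(w) t^2$, as claimed.

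There is essentially no genuine obstacle here: the entire analytic difficulty — controlling the defect ${\mathcal R}_I = F[X^s Y^s w] - X^s F[Y^s w]$ through the Lipschitz bound (\ref{eq15}), the near-identity estimate of Lemma \ref{Lemma2}, and the $\Ss$-regularity of $F$ from Lemma \ref{Lemma6} — has already been absorbed into the bound on $\| {\mathcal R}_I(s,w)\|_{L^2}$. The only points requiring a little care are the unitarity step (to eliminate $X^{t-s}$) and the monotonicity argument that lets the $T$-dependent maximum be extracted from the integral; once these are in place the result reduces to a one-line integration that produces the quadratic factor $t^2$.
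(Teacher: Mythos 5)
Your proposal is correct and follows essentially the same route as the paper: Minkowski's inequality, unitarity of $X^{t-s}$ on $L^2$, insertion of the bound on $\| {\mathcal R}_I (s,w)\|_{L^2}$ from the remark after Lemma \ref{Lemma7}, and integration of $\int_0^t s\, ds$ with the $s$-dependent maximum dominated by its value at $T$ so as to recover the constant $C_2(w)$ of (\ref{eq17}). No gaps; your explicit remarks on the monotonicity of the $\max$ factor and on absorbing the factor $\frac 12$ into $C$ simply make precise what the paper leaves implicit in the phrase ``from which the Lemma follows.''
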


\begin {proof}
Indeed, let $t \ge 0$ for argument's sake; then:
\be
\| {\mathcal R}(t,w) \|_{L^2} & \le & |\nu| \int_0^t \| X^{t-s} {\mathcal R}_I (s,w) \|_{L^2} ds \\ 
& \le & |\nu| \int_0^t \| {\mathcal R}_I (s,w) \|_{L^2} ds \\
& \le & |\nu| \int_0^t C s \max \left [ 1, s^2 \nu^2  \| w\|_{\Ss}^{4\sigma} \right ]^{2\sigma +1}
 \| w \|_{\Ss}^{2\sigma +1}  ds 
\ee
from which the Lemma follows.
\end {proof}

Now, we are ready to estimate the difference  (\ref {eq18}): 
\be
&& \| S^t w - X^t Y^t w \|_{L^2} \le  |\nu | \int_0^t \left \| X^{t-s} \left \{ F \left [ S^s w \right ] - F \left [ X^s Y^s w  \right ] \right \} 
\right \|_{L^2}  ds + \| {\mathcal R} (t,w)\|_{L^2} \\ 
&& \ \le |\nu | \int_0^t \left \|  F \left [ S^s w \right ] - F \left [ X^s Y^s w \right ]  \right \|_{L^2}  ds + \| {\mathcal R} (t,w)\|_{L^2} \\ 
&&\ \ \le |\nu |(2\sigma +1)  \int_0^t \max \left [ \| S^s w \|_{L^\infty} ,  \| X^s Y^s w \|_{L^\infty} \right ]^{2\sigma } 
\left \|   S^s w -  X^s Y^s w \right \|_{L^2}  ds + \| {\mathcal R} (t,w)\|_{L^2} \\ 
&&\ \ \le C_1(w,t ) \int_0^t \left \|   S^s w -  X^s Y^s w \right \|_{L^2}  ds +|\nu | C_2 (w) t^2
\ee
from Remark \ref {RefS}, recalling that $X^t$ is an unitary operator on $L^2$ and where $C_1 (w,t)$ and $C_2 (w)$ are respectively defined by (\ref {eq16}) and (\ref {eq17}). \ That is
\be
y(t) \le C_1 \int_0^t y(s) ds + |\nu | C_2 t^2   \, , \ t \in [0,T]\, , 
\ee
where we set
\be
y(t) := \| S^t w - X^t Y^t w \|_{L^2}\, . 
\ee
Thus, the Gronwall's Lemma implies that
\be
y(t) \le  |\nu | C_2 t^2 e^{C_1 t}  \, ,\ \forall t \in [0,T]\, .
\ee
and Theorem \ref {Teo_due} is so proved.
\end {proof}

Finally, we can conclude the proof of Theorem \ref {Teo_uno}. \ Let us fix $t\le T$, let $\delta >0$ small enough and let $n \in \N$ such that $t=n\delta$, let $Z^\delta = X^\delta Y^\delta$; then, the triangle inequality 
yields to 
\be
\left \| \left (Z^{\delta } \right )^n \psi_0 - S^{n\delta } \psi_0 \right \|_{L^2} 
&=& \left \| \sum_{j=0}^{n-1} \left [ \left (Z^{\delta} \right )^{(n-j-1)} Z^\delta S^{j\delta } \psi_0 - \left (Z^{\delta} \right )^{(n-j-1)} S^{(j+1)\delta } 
\psi_0 \right ] \right \|_{L^2} \\
&\le & \sum_{j=0}^{n-1} \left \| Z^\delta \left (Z^{\delta} \right )^{(n-j-1)}  S^{j\delta } \psi_0 - Z^\delta \left (Z^{\delta} \right )^{(n-j-2)} S^{(j+1)\delta } 
\psi_0 \right \|_{L^2} \, . 
\ee

From this inequality, by making use of Lemma \ref {Lemma5}, Remark \ref {R_Pluto} and Theorem \ref {Teo_due} it follows that
\bee
c_{n-j-1,j}&:=& \left \| Z^\delta \left (Z^{\delta} \right )^{(n-j-1)}  S^{j\delta } \psi_0 - Z^\delta \left (Z^{\delta} \right )^{(n-j-2)} S^{(j+1)\delta } 
\psi_0 \right \|_{L^2} \nonumber \\ 
& \le &  \left [ 1 + 2 \sigma |\nu \delta | C^{2\sigma -1} \right ] c_{n-j-2,j}\nonumber \\
&\le &  \left [ 1 + 2 \sigma |\nu \delta | C^{2\sigma -1} \right ]^{n-j-1} c_{0,j} \label {new}
\eee
for some positive constant $C$ independent of $n$ since (\ref {techass}). \ Therefore, we have proved that 
\be
\left \| \left (Z^{\delta } \right )^n \psi_0 - S^{n\delta } \psi_0 \right \|_{L^2}  
&\le & \sum_{j=0}^{n-1} \left [ 1 + 2 \sigma |\nu \delta | C^{2\sigma -1} \right ]^{n-j-1} 
\left \|  Z^\delta S^{j\delta } \psi_0 -  S^{(j+1)\delta } \psi_0 \right \|_{L^2} \\
&\le & \sum_{j=0}^{n-1} \left [ 1 + 2 \sigma |\nu \delta | C^{2\sigma -1} \right ]^{n-j-1} |\nu |C_{2,j} \delta^2 e^{C_{1,j} \delta }
\ee
where
\be
C_{2,j} &:=& C_2 (S^{j\delta }\psi_0 ) =  C  \|S^{j\delta} \psi_0 \|_{\Ss}^{2\sigma +1} \max \left [ 1 , T^2 \nu^2 
\left \| S^{j\delta } \psi_0 \right \|_{{\Ss}}^{4\sigma } \right ]^{2\sigma +1} \\ 
C_{1,j} &:=& C_1 (S^{j\delta }\psi_0 , \delta) = |\nu |(2\sigma +1) \max_{s\in [0,\delta ]} \left \{ \max \left [ 
\left \|  S^{(j+1)\delta } \psi_0 \right \|_{L^\infty} , 
\left \|  Z^s S^{j\delta } \psi_0 \right \|_{L^\infty}
\right ] \right \} \, . 
\ee
From Lemma \ref {Lemma3} and Lemma \ref {Lemma4} then it follows that
\be
\left \| Z^s S^{j\delta } \psi_0 \right \|_{L^\infty} \le \left \| Z^s S^{j\delta } \psi_0 \right \|_{{\Ss}} \le 
C \max \left [ 1 , \nu^2 \delta^2 \left \| S^{j\delta } \psi_0 \right \|_{L^\infty}^{4\sigma } 
\right ] \left \| S^{j\delta } \psi_0 \right \|_{{\Ss}}
\ee
and that
\be
\left \| S^{(j+1)\delta} \psi_0 \right \|_{L^\infty} \le C \left \| S^{(j+1)\delta} \psi_0 \right \|_{\Ss}\, .
\ee
Since we assume that the solution $S^t \psi_0 \in {\Ss}$ for any $t \le T$ then we can conclude that
\be
C_{1,j} \, , \ C_{2,j} \le C_3 \, , \ \forall j =0,1,\ldots , n-1\, , 
\ee
for some positive constant $C_3:=C_3 (T, \psi_0 )>0$. \ Hence,
\be
\left \| \left (Z^{\delta } \right )^n \psi_0 - S^{n\delta} \psi_0 \right \|_{L^2} 
&\le & C |\nu |\delta^2 \sum_{j=0}^{n-1} \left [ 1 + 2 \sigma |\nu \delta | C^{2\sigma -1} \right ]^{n-j-1} C_3  e^{C_3 \delta }\\ 
&\le & C  \delta |\nu t| , \ t=n \delta < T\, ,  
\ee
from which the proof of Theorem \ref {Teo_uno} follows.

\section {Numerical Experiments} \label {Sez4}

{For any fixed $t$ we numerically compute the approximate solutions 
\be
\psi_{t,j} =[X_j^\delta Y^\delta_j ]^n \psi_0
\ee
for different values of $n$ where $\delta = \frac {t}{n}$; $\psi_j  = \left [ X_j^\delta Y_j^\delta \right ]^n \psi_0 $, $j=1,2$, where $X_1^\delta$ is 
the evolution operator associated to $- \frac {\partial^2}{\partial x^2}$, $X_2^\delta$ is the evolution operator associated to $- \frac {\partial^2}{\partial x^2} + V$, 
$Y_1^\delta$ is the evolution operator associated to the differential equation $i \dot \psi = V \psi + \nu |\psi |^{2\sigma} \psi$ and $Y_2^\delta$ is the evolution operator 
associated to the differential equation $i \dot \psi_t =\nu |\psi_t |^{2\sigma} \psi_t$. \ We consider the harmonic oscillator potential where $V(x)= + \frac 14 \omega^2 x^2$ and the inverted oscillator potential where $V(x)=-\frac 14 \omega^2 x^2$. \ In 
both cases we consider the focusing (where $\nu <0$) and defocusing (where $\nu >0$) nonlinearity.

In this Section, for simplicity's sake, let us drop out the index $t$, i.e. $\psi_t = \psi$, $\psi_{t,j} = \psi_j$, and so on. \ For argument's sake the initial wavefunction is a Gaussian function 
\be
\psi_0 (x) = \frac {1}{\sqrt [4] {2\pi \St^2}} e^{-(x-x_0)^2/4 \St^2 + i v_0 x}
\ee
where
\be
x_0 =-3 \, , \ v_0 = 2 \ \mbox { and } \ \St =0.5 \, .
\ee
We compare in numerical experiments the rate of convergence of the numerical solutions $\psi_j$.\ More precisely, we compare the probability densities
\be
\rho_j (x) = |\psi_j (x)|^2 \, , \ j=1,2 \, , 
\ee
and the expectation value of the position observable
\be
\langle x \rangle_j := \langle \psi_j , x \psi_j \rangle_{L^2} \, , \ j=1,2 \, , 
\ee
for a fixed value of $t$. 

We recall that the evolution operators $Y_1^\delta$ and $Y_2^\delta$ are the multiplication operators (\ref {eq5Bis}) and (\ref {eq8}); $X_1^\delta$ and $X_2^\delta$ are the integral operators (\ref {eq18Bis}), (\ref {eq19Bis}) and (\ref {MIO}). \ Since 
the evolution operators $X_j^\delta$ are integral operators then we numerically compute the integral on a large enough fixed interval $[x_{min},x_{max}]$ by 
dividing it in $m$ intervals with the same amplitude $\frac {x_{max}-x_{min}}{m}$, that is $m$ is the number of points of the mesh. \ Let $\psi_j^{n,m}$ be the numerical solutions given by the vector $\left ( \left [ X_j^\delta Y_j^\delta \right ]^n \psi_0 \right )(x_\ell )$, 
where $x_\ell = x_{min}+ \ell \frac {x_{max}-x_{min}}{m} $ for $\ell =0,1,\ldots , m$. 

If we denote by $\psi_j^{\infty}$ and $\rho_j^{\infty}$ the values of 
$\psi_j^{n,m}$ and $\rho_j^{n,m}$, $j=1,2$, where $n$ and $m$ are the largest values considered in the numerical experiment, then we are going to estimate the quantities
\be
\Delta_j^{n,m} = \max_{\ell =0,1,\ldots ,m} | \rho_j^\infty (x_\ell ) - \rho_j^{n,m} (x_\ell ) | \, , \ j=1,2\, , 
\ee
for different values of $n$ and $m$. \ Furthermore, we consider also the difference
\be
\delta^{n,m}:= 
\max_{\ell =0,1,\ldots ,m} | \rho_1^{n,m}(x_\ell ) - \rho_2^{n,m} (x_\ell )| \, . 
\ee
}

Finally, we compare also the exact expected value of the position observable $\langle x \rangle^t = \langle \psi_t , x \psi_t \rangle_{L^2} $ with the ones $\langle x \rangle^t_j = \langle \psi_{j,t} , x \psi_{j,}t \rangle_{L^2}  $, $j=1,2$,  obtained with the two approximate solutions. \ In fact, Ehrenfest's Theorem for nonlinear Sch\"odinger equations does not generically hold true in the usual form, but when one cosniders the position $x$ and momentum $p$ observables we still have that
\be
\frac {d\langle x \rangle^t}{dt} = \frac {1}{m} \langle p \rangle^t \ \mbox { and } \ \frac {d\langle p \rangle^t}{dt} =- \left \langle \frac {dV}{dx} \right \rangle^t 
\ee
where 
\be
\langle p \rangle^t = -i \left \langle \psi_t , \frac {\partial \psi_t}{\partial x} \right \rangle_{L^2} \ \mbox { and } \  \left \langle \frac {dV}{dx} \right \rangle^t =  \left \langle \psi_t, \frac {dV}{dx} \psi_t \right \rangle_{L^2} \, . 
\ee
In particular, since $m= \frac 12$ and $V(x) = \alpha x^2$ then $\langle x \rangle^t$ is solution to the differential equation
\be
\left \{
\begin {array}{l}
\frac {d^2 \langle x\rangle^t}{dt^2} + \alpha \langle x \rangle^t =0 \\
\langle x \rangle^0 = x_0 \ \mbox { and } \ \left. \frac {d \langle x\rangle^t}{dt} \right |_{t=0} = 2 \langle p \rangle^0 = 2 v_0 
\end {array}
\right.
\ee
Thus
\bee
\langle x \rangle^t = 
\left \{
\begin {array}{ll}
x_0 \cos (\omega t ) + \frac {2v_0}{\omega} \sin (\omega t ) & \mbox { if } \alpha = + \frac 14 \omega^2 \\ 
 & \\
x_0 \cosh (\omega t ) + \frac {2v_0}{\omega} \sinh (\omega t ) & \mbox { if } \alpha = - \frac 14 \omega^2
\end {array}
\right. \, . \label {eqA}
\eee

\subsection {Harmonic oscillator}

In such an experiment let
\be
\omega =1 \, , \ \sigma =1 \ \mbox { and } \ t=10\, .
\ee
We numerically compute the integral operators $X_1^\delta$ and $X_2^\delta$ where the integral domain is restricted to the interval $[x_{min} , x_{max}]$ where
\be
x_{min}=-50 \ \mbox { and } \ x_{max} =+ 50 .
\ee
The indexes $n$ and $m$ respectively run from $60$ to $240$ and from $2000$ to $8000$; we denote by $\psi_j^{\infty} = \psi_j^{300,10000}$ the corresponding 
numerical solution obtained when $n=300$, and thus $\delta = \frac {1}{30}$, and $m=10000$.

\subsubsection {Defocusing nonlinearity} We fix
\be
\nu =+10\, .
\ee
{The numerical experiment shows that the following upper bound of the absolute value of the difference between the two probability densities $\rho_1^\infty$ 
and $\rho_2^\infty$ holds true} 
\bee
\max_{x} | \rho_1^\infty - \rho_2^\infty |= 0.025 \, , \label {upper1}
\eee
and in Figure \ref {Fig1} - left hand side - we plot the graph of the function $\rho_2^\infty$. \ In Table \ref {tabella1} we collect the difference $\Delta_j^{n,m}$ 
between $\rho_j^{n,m}$ and $\rho_j^{\infty}$, the ratio $\Delta_2^{n,m}/\Delta_1^{n,m}$, the difference $\delta_j^{n,m}$ between $\rho_1^{n,m}$ and $\rho_2^{n,m}$ and, 
finally, the expectation values $\langle x \rangle_1^t$ and $\langle x \rangle_2^t$ for different values of $n$ and $m$ and for $t=3$. \ It turns out that the 
values obtained in correspondence of the approximation $\psi_2^t$ become rapidly stable even for $n$ and $m$ not particularly large; in particular the 
expectation value $\langle x \rangle_2^{10}$ is practically constant, while the expectation value $\langle x \rangle_1^{10}$ slowly converges to its final value.
\begin{table}
\begin{center}
\begin{tabular}{|c|c||c|c|c||c||c|c|} 
\hline
 \multicolumn{8}{|c|}{$\nu =+10$} 
\\ \hline
$n $  &  $m$   & $\Delta_1^{n,m}$  &   $\Delta_2^{n,m}$   &  $ {\Delta_2^{n,m}}/{\Delta_1^{n,m}}$ & $\delta^{n,m}$ & $\langle x \rangle_1^{10}$ & $\langle x \rangle_2^{10}$
\\ \hline \hline 
$60$      &  $2000$   & $0.31$   & $0.18$   & $0.58$ & $0.16$ & $0.14$ & $0.34$ \\ \hline
$90$      &  $3000$   & $0.12$   & $0.12$   & $0.94$ & $0.08$ & $0.22$ & $ 0.34$\\ \hline
$120$     &  $4000$   & $0.077$  & $0.046$  & $0.60$ & $0.070$ & $0.26$ & $0.34$ \\ \hline
$150$     &  $5000$   & $0.048$  & $0.027$  & $0.56$ & $0.055$ & $0.28$ & $0.34$ \\ \hline
$180$     &  $6000$   & $0.032$  & $0.017$  & $0.53$ & $0.044$ & $0.29$ & $0.34$ \\ \hline
$210$     &  $7000$   & $0.020$  & $0.010$  & $0.51$ & $0.037$ & $0.30$ & $0.34$ \\ \hline
$240$     &  $8000$   & $0.011$  & $0.0060$ & $0.53$ & $0.032$ & $0.30$ & $0.34$ \\ \hline
$270$     &  $9000$   & $0.0049$ & $0.0027$ & $0.55$ & $0.028$ & $0.31$ & $0.34$ \\ \hline
\end{tabular}
\caption{Table of values corresponding to the case of defocusing nonlinearity $\nu=+10$ with harmonic oscillator potential $V(x)=+\frac 14 \omega^2 x^2$. \ The exact expectation value is $\langle x \rangle^{10} = 0.3411$ from (\ref {eqA}).}
\label{tabella1}
\end{center}
\end {table}
\begin{figure}
\begin{center}
\includegraphics[height=5cm,width=5cm]{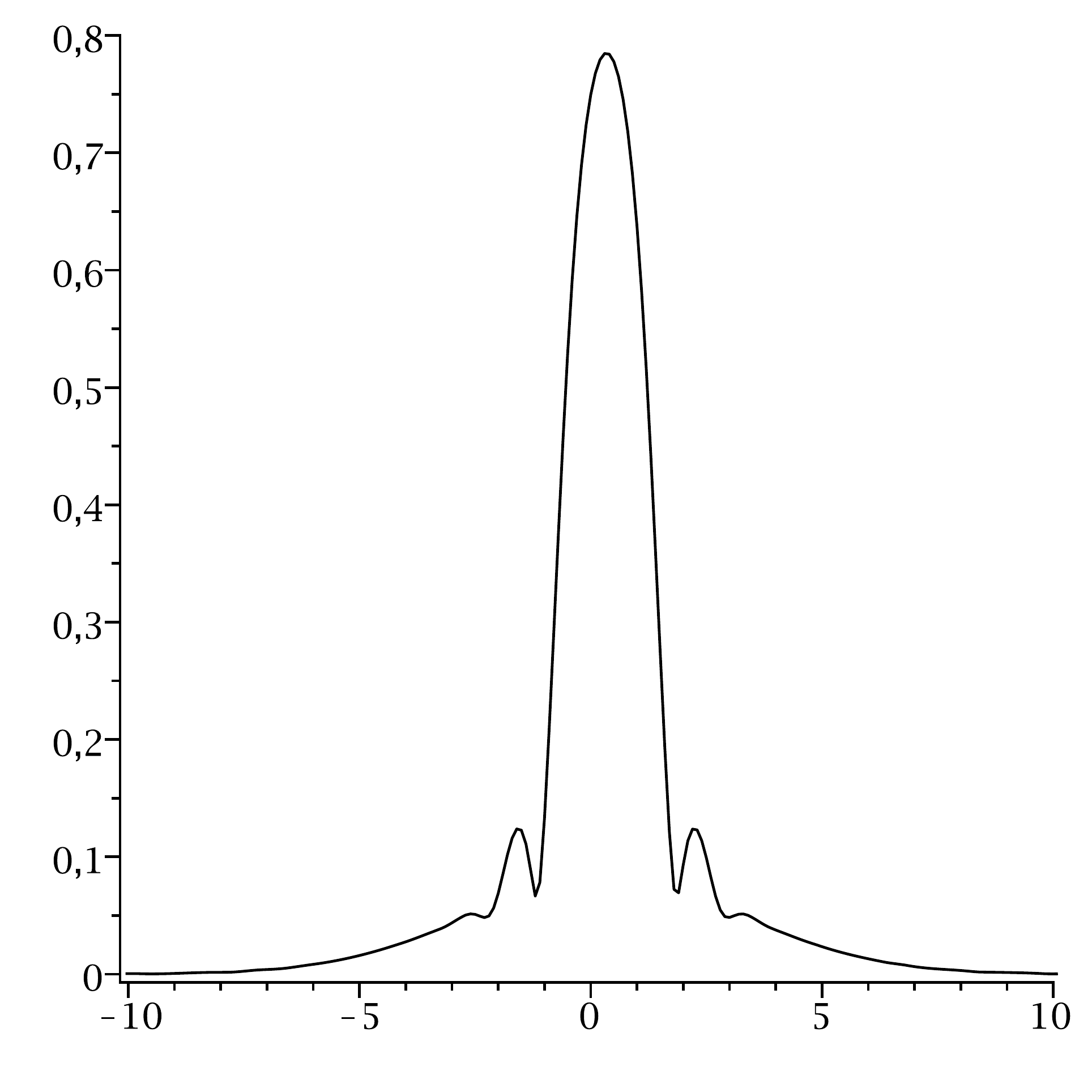}
\includegraphics[height=5cm,width=5cm]{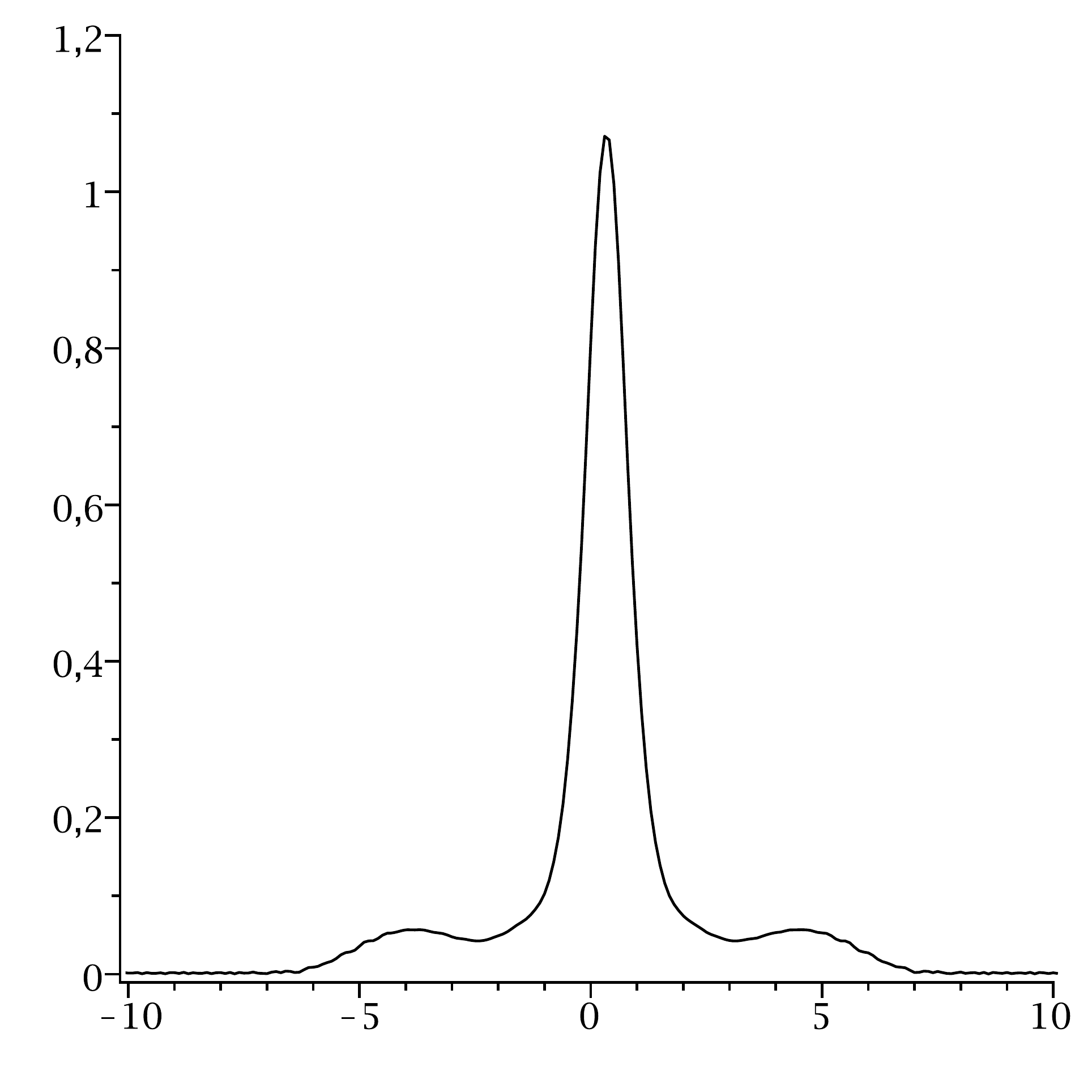}
\caption{Harmonic oscillator. \ We plot the graph of the probability density $\rho_2^\infty $ at $t=10$ of the numerical solution 
$\psi_2^\infty$ in the defocusing case for $\nu =+10$ (left hand side picture) and in the focusing case for $\nu =-10$ (right hand side picture).}
\label {Fig1}
\end{center}
\end{figure}

\subsubsection {Focusing nonlinearity} We fix
\be
\nu =-10\, .
\ee
{In this case the numerical experiment shows that the upper bound concerning the difference between the two probability densities $\rho_1^\infty$ 
and $\rho_2^\infty$ is of the same order of (\ref {upper1}) since}
\be
\max_{x} | \rho_1^\infty - \rho_2^\infty |= 0.040 \, , 
\ee
and in Figure \ref {Fig1} - right hand side - we plot the graph of the function $\rho_2^\infty$. \ In Table \ref {tabella2} we collect the difference 
$\Delta_j^{n,m}$ between $\rho_j^{n,m}$ and $\rho_j^{\infty}$, the ratio $\Delta_2^{n,m}/\Delta_1^{n,m}$, the difference $\delta_j^{n,m}$ 
between $\rho_1^{n,m}$ and $\rho_2^{n,m}$ and, finally, the expectation values $\langle x \rangle_1^t$ and $\langle x \rangle_2^t$ for different 
values of $n$ and $m$ and for $t=3$. \ It turns out that the values for the expectation values coincide with the ones obtained in defocusing case; 
even in this case the approximation $\psi_2^t$ become rapidly stable even for $n$ and $m$ not particularly large and we can observe the same 
behaviour of $\langle x \rangle_1^{10}$ and $\langle x \rangle_2^{10}$ already observed in the defocusing case (in fact, the expectation 
values are exactly the same of the previous experiment).
\begin{table}
\begin{center}
\begin{tabular}{|c|c||c|c|c||c||c|c|} 
\hline
 \multicolumn{8}{|c|}{$\nu =-10$} 
\\ \hline
$n $  &  $m$   & $\Delta_1^{n,m}$  &   $\Delta_2^{n,m}$   &  $ {\Delta_2^{n,m}}/{\Delta_1^{n,m}}$ & $\delta^{n,m}$ & $\langle x \rangle_1^{10}$ & $\langle x \rangle_2^{10}$
\\ \hline \hline 
$60$      &  $2000$   & $0.43$   & $0.29$   & $0.68$ & $0.19$ & $0.14$ & $0.34$ \\ \hline
$90$      &  $3000$   & $0.34$   & $0.34$   & $1.00$ & $0.12$ & $0.22$ & $0.34$ \\ \hline
$120$     &  $4000$   & $0.17$   & $0.17$   & $1.00$ & $0.099$ & $0.26$ & $0.34$ \\ \hline
$150$     &  $5000$   & $0.076$   & $0.054$   & $0.71$ & $0.082$ & $0.28$ & $0.34$ \\ \hline
$180$     &  $6000$   & $0.086$  & $0.083$  & $0.96$ & $0.075$ & $0.29$ & $0.34$ \\ \hline
$210$     &  $7000$   & $0.081$  & $0.077$  & $0.95$ & $0.064$ & $0.30$ & $0.34$ \\ \hline
$240$     &  $8000$   & $0.038$  & $0.036$  & $0.94$ & $0.057$ & $0.30$ & $0.34$ \\ \hline
$270$     &  $9000$   & $0.023$  & $0.022$  & $0.94$ & $0.045$ & $0.31$ & $0.34$ \\ \hline
\end{tabular}
\caption{Table of values corresponding to the case of focusing nonlinearity $\nu=-10$ with harmonic oscillator potential $V(x)=+\frac 14 \omega^2 x^2$. \ The exact expectation value is $\langle x \rangle^{10} = 0.3411$ from (\ref {eqA}).}
\label{tabella2}
\end{center}
\end {table}

\subsection {Inverted oscillator} In such an experiment let
\be
\omega =1 \, , \ \sigma =1 \ \mbox { and } \ t=3\, .
\ee
We numerically compute the integral operators $X_1^\delta$ and $X_2^\delta$ where the integral domain is restricted to the interval $[x_{min} , x_{max}]$ where
\be
x_{min}=-200 \ \mbox { and } \ x_{max} =+ 200 .
\ee
The indexes $n$ and $m$ respectively run from $30$ to $135$ and from $10000$ to $45000$; thus we denote by $\psi_j^{\infty} = \psi_j^{150,50000}$ the corresponding 
numerical solution obtained when $n=150$, and thus $\delta = \frac {1}{50}$, and $m=50000$.

\subsubsection {Defocusing nonlinearity} We fix
\be
\nu =+10\, .
\ee
{In this case the two probability densities $\rho_1^\infty$ and $\rho_2^\infty$ practically coincides since the numerical experiment shows that the upper bound 
of their difference in much smaller than (\ref {upper1}); indeed, it takes the value}
\be
\max_{x} | \rho_1^\infty - \rho_2^\infty |= 0.00046 \, , 
\ee
and in Figure \ref {Fig2} - left hand side - we plot the graph of the function $\rho_2^\infty$. \ In Table \ref {tabella5} we collect the difference 
$\Delta_j^{n,m}$ between $\rho_j^{n,m}$ and $\rho_j^{\infty}$, the ratio $\Delta_2^{n,m}/\Delta_1^{n,m}$, the difference $\delta_j^{n,m}$ between $\rho_1^{n,m}$ 
and $\rho_2^{n,m}$ and, finally, the expectation values $\langle x \rangle_1^t$ and $\langle x \rangle_2^t$ for different values of $n$ and $m$ and for $t=10$. \ It 
turns out that, as well as in the previous experiments, the values obtained in correspondence of the approximation $\psi_2^t$ become rapidly stable even for $n$ and $m$ 
not particularly large.
\begin{table}
\begin{center}
\begin{tabular}{|c|c||c|c|c||c||c|c|} 
\hline
 \multicolumn{8}{|c|}{$\nu =+10$} 
\\ \hline
$n $  &  $m$   & $\Delta_1^{n,m}$  &   $\Delta_2^{n,m}$   &  $ {\Delta_2^{n,m}}/{\Delta_1^{n,m}}$ & $\delta^{n,m}$ & $\langle x \rangle_1^{3}$ & $\langle x \rangle_2^{3}$ \\ \hline \hline 
$30$      &  $10000$   & $0.017$   & $0.0050$   & $0.29$ & $0.017$ & $8.10$ & $9.87$ \\ \hline
$45$      &  $15000$   & $0.0024$   & $0.0017$   & $0.70$ & $0.0015$ & $8.84$ & $9.87$ \\ \hline
$60$     &  $20000$   & $0.0015$   & $0.0011$   & $0.69$ & $0.0011$ & $9.10$ & $9.87$ \\ \hline
$75$     &  $25000$   & $0.0010$   & $0.00069$   & $0.68$ & $0.0009$ & $ 9.26$ & $9.87$ \\ \hline
$90$     &  $30000$   & $0.00067$  & $0.00045$  & $0.67$ & $0.00075$ & $9.36$ & $9.87$ \\ \hline
$105$     &  $35000$   & $0.00043$  & $0.00029$  & $0.67$ & $0.00065$ & $9.43$ & $9.87$ \\ \hline
$120$     &  $40000$   & $0.00025$  & $0.00017$  & $0.66$ & $0.00057$ & $9.49$ & $9.87$ \\ \hline
$135$     &  $45000$   & $0.00011$  & $0.000073$  & $0.66$ & $0.00051$ & $9.53$ & $9.87$ \\ \hline
\end{tabular}
\caption{Table of values corresponding to the case of defocusing nonlinearity $\nu=+10$ with inverted oscillator potential $V(x)=-\frac 14 \omega^2 x^2$. \ The exact expectation value is $\langle x \rangle^{3} = 9.8685$ from (\ref {eqA}).}
\label{tabella5}
\end{center}
\end {table}
\begin{figure}
\begin{center}
\includegraphics[height=5cm,width=5cm]{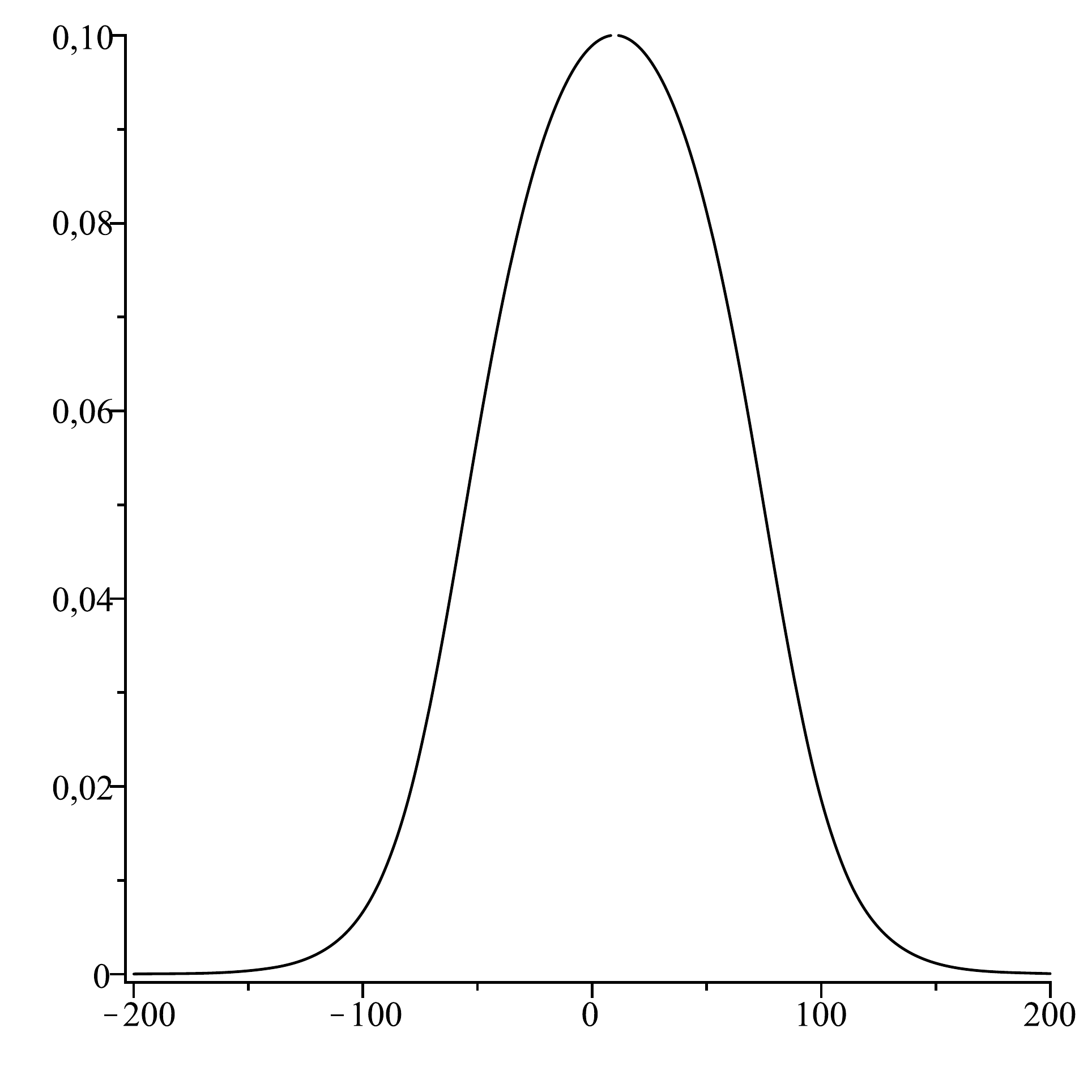}
\includegraphics[height=5cm,width=5cm]{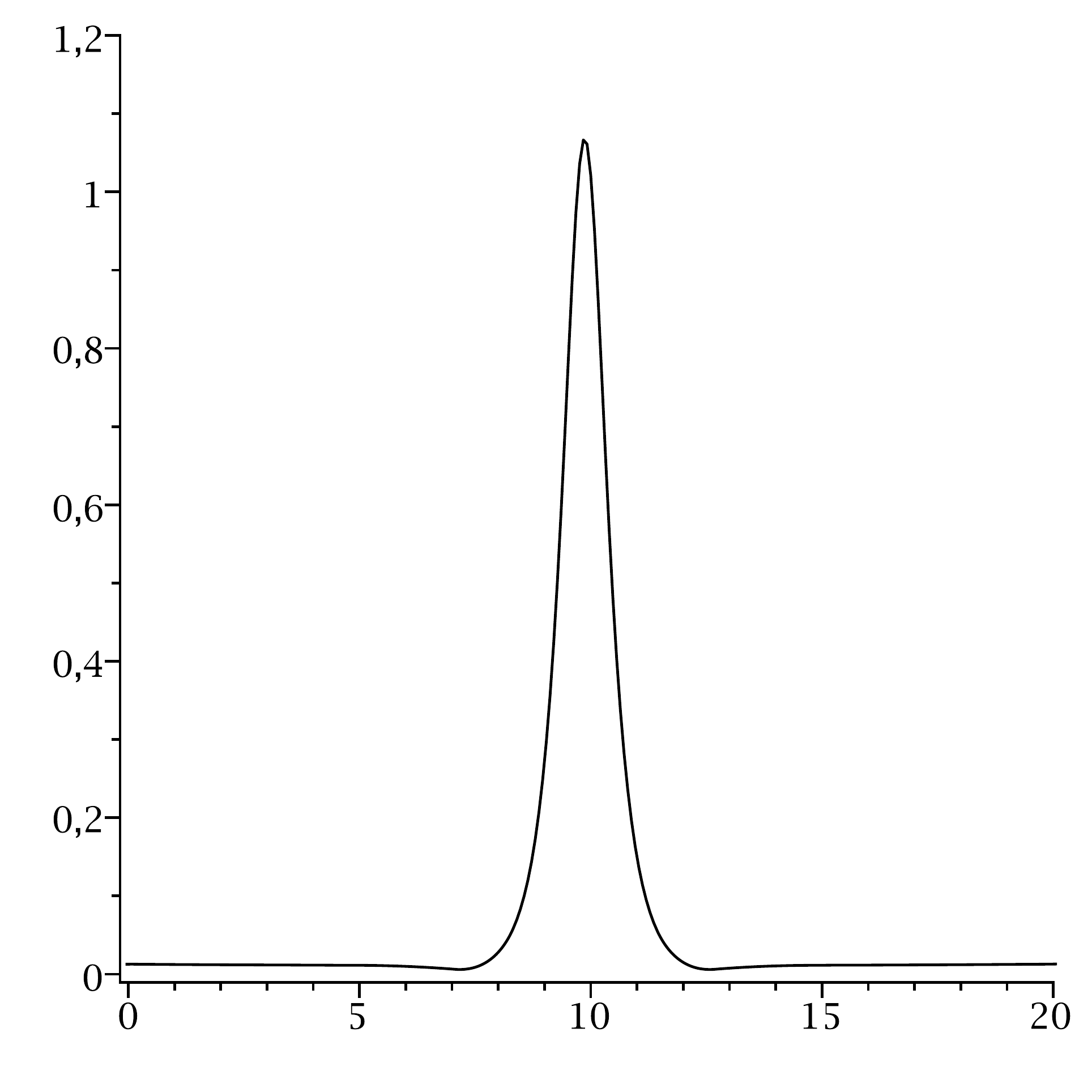}
\caption{Inverted oscillator. \ We plot the graph of the probability density $\rho_2^\infty $ at $t=3$ of the numerical solution $\psi_2^\infty$ in the defocusing 
case for $\nu =+10$ (left hand side picture) and in the focusing case for $\nu =-10$ (right hand side picture).}
\label {Fig2}
\end{center}
\end{figure}

\subsubsection {Focusing nonlinearity} We fix
\be
\nu =-10\, .
\ee
In this experiment {we have that a significant} difference between $\rho_1^\infty$ and $\rho_2^\infty$ occurs because of a shift, as shown in Figure \ref {Fig6},  since 
\be
\max_{x} | \rho_1^\infty - \rho_2^\infty |= 0.37 \, , 
\ee
that slowly decreases for increasing values of $n$ and $m$. \ {Such a shift is due to the fact that in the usual spectral splitting approximation $(X_1^\delta Y_1^\delta)^n \psi_0$ the linear part is approximated as well as the nonlinear one; 
in contrast, in the proposed here spectral splitting approximation $(X_2^\delta Y_2^\delta)^n \psi_0$ the linear part is exactly solved and the approximation only concerns the nonlinear one. \ In order to reduce such 
a shift one has to significantly increases the number $m$ of the points of the mesh and the number $n$ of iterations in the usual spectral splitting approximation.}

In Table \ref {tabella6} we collect the difference $\Delta_j^{n,m}$ between $\rho_j^{n,m}$ and $\rho_j^{\infty}$, the ratio $\Delta_2^{n,m}/\Delta_1^{n,m}$, the 
difference $\delta_j^{n,m}$ between $\rho_1^{n,m}$ and $\rho_2^{n,m}$ and, finally, the expectation values $\langle x \rangle_1^t$ and $\langle x \rangle_2^t$ for 
different values of $n$ and $m$ and for $t=10$. \ Concerning the velocity of convergence of the approximate solutions we can draw the same kind of conclusions of 
the previous numerical experiments.
\begin{table}
\begin{center}
\begin{tabular}{|c|c||c|c|c||c||c|c|} 
\hline
 \multicolumn{8}{|c|}{$\nu =-10$} 
\\ \hline
$n $  &  $m$   & $\Delta_1^{n,m}$  &   $\Delta_2^{n,m}$   &  $ {\Delta_2^{n,m}}/{\Delta_1^{n,m}}$ & $\delta^{n,m}$ 
& $\langle x \rangle_1^{3}$ & $\langle x \rangle_2^{3}$ \\ \hline \hline 
$30$      &  $10000$   & $0.93$   & $0.12$   & $0.13$ & $0.89$ & $8.23$ & $9.86$ \\ \hline
$45$      &  $15000$   & $0.72$   & $0.063$   & $0.087$ & $0.82$ & $8.84$ & $9.87$ \\ \hline
$60$     &  $20000$   & $0.52$   & $0.037$   & $0.071$ & $0.73$ & $9.10$ & $9.87$  \\ \hline
$75$     &  $25000$   & $0.37$   & $0.023$   & $0.061$ & $0.64$ & $9.25$ & $9.87$  \\ \hline
$90$     &  $30000$   & $0.26$  & $0.014$  & $0.055$ & $0.56$ & $9.36$ & $9.87$  \\ \hline
$105$     &  $35000$   & $0.17$  & $0.0086$  & $0.051$ & $0.50$ & $9.43$ & $9.87$  \\ \hline
$120$     &  $40000$   & $0.099$  & $0.0048$  & $0.048$ & $0.45$ & $9.49$ & $9.87$  \\ \hline
$135$     &  $45000$   & $0.044$  & $0.0020$  & $0.046$ & $0.40$ & $9.53$ & $9.87$  \\ 
\hline
\end{tabular}
\caption{Table of values corresponding to the case of focusing nonlinearity $\nu=-10$ with inverted oscillator potential $V(x)=-\frac 14 \omega^2 x^2$. \ The exact expectation value is $\langle x \rangle^{3} = 9.8685$ from (\ref {eqA}).}
\label{tabella6}
\end{center}
\end {table}
\begin{figure}
\begin{center}
\includegraphics[height=6cm,width=8cm]{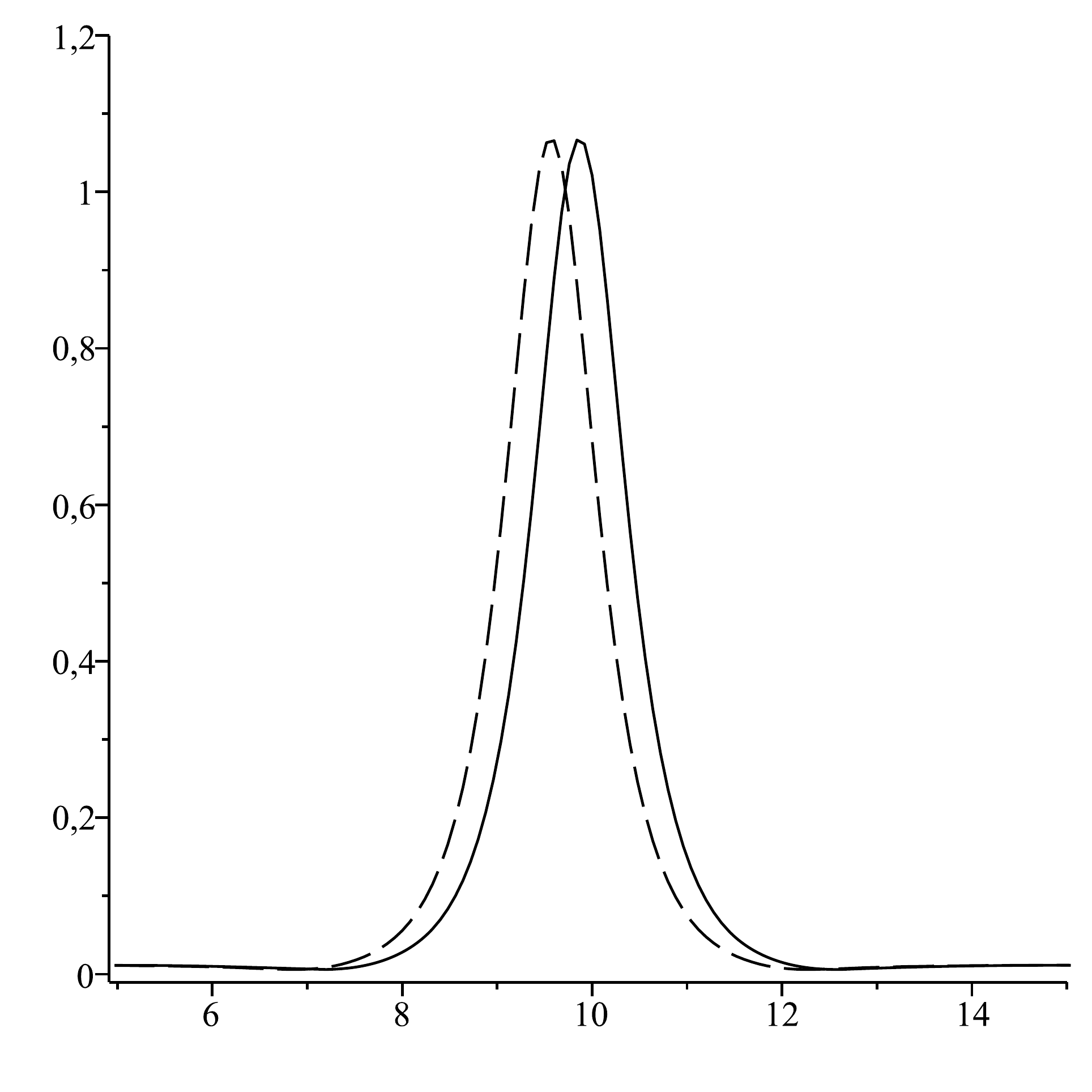}
\caption{Full line is the graph of the function $\rho_2^\infty$, broken line is the graph of the function $\rho_1^\infty$; the two graphs differ because of 
a small translation of the spatial coordinate.}
\label {Fig6}
\end{center}
\end{figure}

\section {Conclusions} \label {Sez5}

Theorem \ref {Teo_uno} states that the result reported in this paper has at least as much theoretical validity as the method based on the standard spectral splitting 
approximation. 

In fact, numerical experiments suggest that this new method has a significantly higher speed of convergence than the standard method and therefore it seems more 
suitable for performing sophisticated numerical experiments. \ {This higher speed of convergence is evident in Table \ref {tabella6} and in Figure \ref {Fig6},  
but in reality it could already be observed in other cases as well, although the effect is much less evident, and it is due to the fact that in our proposed spectral 
splitting method the linear term is exactly treated and not approximated as in the usual approximation. Indeed, if one compares the absolute errors 
$\Delta_1^{n,m}$ and $\Delta_2^{n,m}$ that respectively occur in the usual spectral splitting approximation and in the spectral splitting approximation proposed 
in this paper then it turns out (see Table \ref {tabella7}) that the second approximation proposed here is more efficient in all the situations and, in particular, 
in the case of focusing nonlinearity and inverted oscillator potential. \ The fact that the method proposed in this paper is faster and more accurate than the usual spectral splitting approximation also emerges when comparing the expected values $\langle x \rangle_j^t$, $j=1,2$, of the position observable obtained through the approximate solutions with the exact value $\langle x \rangle^t$ obtained through Ehrenfest's Theorem.}
\begin{table}
\begin{center}
\begin{tabular}{|c|c|c|c||c|} 
\hline 
$V(x)$ & $\nu$ & $n $  &  $m$   &  $ {\Delta_2^{n,m}}/{\Delta_1^{n,m}}$\\ \hline \hline  
$\frac 14 x^2 $      & $+10$ &  $270$   & $9000$   & $0.55$    \\ \hline
$\frac 14 x^2 $      & $-10$ &  $270$   & $9000$   & $0.94$   \\ \hline
$-\frac 14 x^2 $     & $+10$ &  $135$   & $45000$   & $0.66$     \\ \hline
$-\frac 14 x^2 $     & $-10$ &  $135$   & $45000$   & $0.046$     \\ 
\hline
\end{tabular}
\caption{Comparison of absolute errors for different nonlinearities and harmonic/inverted oscillator potentials.}
\label{tabella7}
\end{center}
\end {table}

Not only that, this advantage could become decisive when numerical experiments are performed when the spatial dimension is greater than $1$ and it would be interesting 
to perform a series of experiments to clarify this issue.

On the other hand, the price to pay is due to the fact that the evolution operator associated with the linear Schr\"odinger operator is not always explicitly known; 
however, one could at least partially overcome this defect by using numerical solvers of the Schr\"odinger equation that are sufficiently efficient and fast.

\appendix

\section {Mehler's formula} \label {SezA}

Here we recall the expression for the evolution operator associated to the linear Schr\"odinger operator $H$ with quadratic potential; this expression is named 
{\it Mehler's formula}.

Since the potential is quadratic then the linear operator $H=-\frac {d^2}{dx^2} + \alpha x^2$, $\alpha \in \R$, admits a self-adjoint extension on the 
domain ${\mathcal D}$ and the evolution operator $e^{-iHt}$ is well defined. 

Let $H_0 = -  \frac {\partial^2}{\partial x^2}$ be the free Schr\"odinger operator; then the associated evolution operator has the form 
\bee
\left [ e^{-itH_0 } \psi_0 \right ] (x) = \int_{\R} K_0 (x,y;t) \psi_0 (y) dy \label {eq18Bis}
\eee
where \cite {T}
\bee
K_0 (x,y;t) = \frac {1}{\sqrt {4\pi i t}} e^{i(x-y)^2/4t} \, . \label {M0}
\eee

Let $H_{HO} = -\frac {\partial^2}{\partial x^2} + \frac 14 \omega^2 x^2$, $\omega >0$, be the Harmonic Oscillator Schr\"odinger operator; then the evolution 
operator has the form
\bee
\left [ e^{-itH_{HO} } \psi_0 \right ] (x) = \int_{\R} K_{HO} (x,y;t) \psi_0 (y) dy \label {eq19Bis}
\eee
where \cite {F}
\bee
K_{HO} (x,y;t) = \sqrt {\frac {\omega}{4\pi i \sin (\omega t)}} \exp \left \{ i\frac {\omega}{4\sin (\omega t)} \left [ (x^2+y^2) \cos (\omega t) - 2 xy \right ] 
\right \} \, . \label {MHO}
\eee

Let $H_{IO} = -\frac {\partial^2}{\partial x^2} - \frac 14 \omega^2 x^2$, $\omega >0$, be the Inverted Oscillator Schr\"odinger operator; then the evolution 
operator has the form
\bee 
\left [ e^{-itH_{IO} } \psi_0 \right ] (x) = \int_{\R} K_{HO} (x,y;t) \psi_0 (y) dy \label {MIO}
\eee
where \cite {BJM,B}
\be
K_{IO} (x,y;t) 
= \sqrt {\frac {\omega}{4\pi i \sinh (\omega t)}} \exp \left \{ i\frac {\omega}{4\sinh (\omega t)} \left [ (x^2+y^2) \cosh (\omega t) - 2 xy \right ] \right \}\, . 
\ee

\begin {remark}
It is well known that
\be
\| e^{-iH\delta } \psi_0 \|_{L^2} = \|  \psi_0 \|_{L^2} 
\ee
for any self-adjoint operator $H$. \ Furthermore, in the case of self-adjoint operator $H$ with quadratic potential then from (\ref {M0}), (\ref {MHO}) and 
(\ref {MIO}) it follows that 
\be
\| e^{-iHt } \psi_0 \|_{L^\infty} \le C t^{-1/2} \|  \psi_0 \|_{L^1} 
\ee
for any $\alpha =\pm \frac 14 \omega^2  \in \R$ and for any $t \le t^\star $, where $ t^\star  < 
\frac {\pi}{\omega} $ is fixed, and for some $C=C(t^\star , \omega)$. 
\end {remark}

\begin {thebibliography}{99}

\bibitem {AKKT} W. Auzinger, T. Kassebacher, O. Koch, and M. Thalhammer, {\it Adaptive splitting methods for nonlinear Schr\"odinger equations in the semiclassical regime}, Numerical Algorithms {\bf 72} 1-35 (2016).

\bibitem {BJM} S. Baskoutas, A. Jannussis, and R. Mignani, {\it Quantum tunneling of a damped and driven, inverted harmonic oscillator}, Journal of Physics A: Mathematical and General {\bf 26} 7137-7147 (1993).

\bibitem {BC} W. Bao, and Y. Cai, {\it Mathematical theory and numerical methods for Bose-Einstein condensation}, Kinetic and Related Models
{\bf 6} 1-135 (2013).

\bibitem {B} G. Barton, {\it Quantum Mechanics of the inverted oscillator potential}, Annals of Physics {\bf 166} 322-363 (1986).

\bibitem {Besse} C. Besse, B. Bid\'egaray, and S. Descombes, {\it Order estimates in time of splitting methods for the nonlinear 
Schr\"odinger equation}, SIAM Journal on Numerical Analysis {\bf 40} 26-40 (2002).

\bibitem {Blinder} S.M. Blinder, {\it Propagators from integral representations of Green's functions for the N-dimensional free-particle, harmonic 
oscillator and Coulomb problems}, Journal of Mathematical Physics {\bf 25} 905-909 (1984).

\bibitem {C1} R. Carles, {\it Global existence results for nonlinear Schr\"odinger equations with quadratic potentials}, Discrete and Continuous Dynamical 
Systems   {\bf 13} 385-398 (2005).

\bibitem {C2} R. Carles, {\it Remarks on Nonlinear Schr\"odinger Equations with Harmonic Potential}, Annales Henri Poincar\'e {\bf 3} 757-772 (2002).

\bibitem {CG} R. Carles, and C. Gallo, {\it On Fourier time-splitting methods for nonlinear Schr\"odinger equations in the semi-classical limit II. Analytic regularity}, Numerische Mathematik {\bf  136} 315-342 (2017).

\bibitem {DGPS} F. Dalfovo, S. Giorgini, P.L. Pitaevskii, and S. Stringari, {\it Theory of Bose-Einstein condensation in trapped gases}, Reviews of Modern Physics {\bf 71} 463-512 (1999).

\bibitem {F} R.P. Feynmann, and A.R. Hibbs, {\it Quantum Mechanics and Path Integrals}, New York: McGraw-Hill (1965), 

\bibitem {NonLinOpt} C. Hernandez Tenorio, E. Villagran Vargas, N.V. Serkin, M. Aguero Granados, T.L. Belyaeva,  R. Pena Moreno, L. Morales Lara, {\it Dynamics of solitons in the model of nonlinear Schr\"odinger equation with an external harmonic potential: I. Bright solitons}, Quantum Electronics {\bf 35} 778-786 (2005).

\bibitem {S} A. Sacchetti, {\it Spectral splitting method for nonlinear Schr\"odinger equation with singular potential}, Journal of Computational Physics {\bf 227} 1483-1499 (2007).

\bibitem{Sal} L. Salasnich, {\it The Role of Dimensionality in the Stability of a Confined Condensed Bose Gas}, Modern Physics Letters B {\bf 11} 1249-1254  (1997).

\bibitem {T} G. Toeschl, {\it Mathematical Methods in Quantum Mechanics with applications to Schr\"odinger operators}, Graduate Studies in Mathematics Volume {\bf 99}, American Mathematical Society (2009).

\end {thebibliography}

\end {document}